\documentclass[11pt,letterpaper]{article}
\usepackage[margin=1in]{geometry}
\usepackage{amsmath,amssymb,amsthm,mathtools}
\usepackage{booktabs}
\usepackage{float}
\usepackage{enumitem}
\usepackage{xcolor}
\usepackage{tikz}
\usetikzlibrary{patterns}
\usepackage[colorlinks=true,linkcolor=blue!50!black,citecolor=green!40!black,urlcolor=blue!50!black]{hyperref}
\usepackage[capitalise,nameinlink,noabbrev]{cleveref}

\newtheorem{theorem}{Theorem}[section]
\newtheorem{lemma}[theorem]{Lemma}
\newtheorem{proposition}[theorem]{Proposition}
\newtheorem{corollary}[theorem]{Corollary}
\newtheorem{fact}[theorem]{Fact}

\newcommand{\R}{\mathbb{R}}
\newcommand{\E}{\mathbb{E}}
\newcommand{\F}{\mathcal{F}}
\newcommand{\1}{\mathbf{1}}
\newcommand{\Fnorm}[1]{\lVert #1\rVert_\infty}
\newcommand{\eps}{\varepsilon}
\newcommand{\osc}{\operatorname{osc}}
\newcommand{\midp}{\operatorname{mid}}
\newcommand{\Rin}{R^{\mathrm{in}}}
\newcommand{\pmone}{\{\pm1\}}

\makeatletter\@secpenalty=0\makeatother
\newcommand{\AIsections}{\cref{sec:proof,app:facts}}

\title{Tight Bounds for Tusn\'ady's Problem in the Plane}
\author{Zhewei Wei\\[2pt]
\normalsize Gaoling School of Artificial Intelligence, Renmin University of China}
\date{}

\usepackage{etoolbox}
\apptocmd{\thebibliography}{\setlength{\itemsep}{2pt}\setlength{\parsep}{0pt}}{}{}

\usetikzlibrary{decorations.pathreplacing}

\begin{document}
\hypersetup{pageanchor=false}
\maketitle

\begin{abstract}
We show that the worst-case combinatorial discrepancy of $n$ points in the plane with respect to axis-parallel rectangles is $\Theta(\log^{3/2}n)$. The known bounds were $\Omega(\log n)$ and $O(\log^{3/2}n)$; we prove the matching lower bound. It holds for random point sets: for every $A>0$, there is a constant $c_A>0$ such that, with probability at least $1-e^{-An}$, every coloring of $n$ independent uniform points in the unit square has an anchored rectangle with imbalance at least $c_A(\log_2n)^{3/2}$. The proof is surprisingly simple and elementary. It reveals one coordinate digit by digit. With overwhelming probability over the points, the conditional gains of an oscillation potential add up to $\Omega(\log^{3/2}n)$ over $\Theta(\log n)$ digits. Bounded differences control the fluctuations well enough for a union bound over all colorings.
\end{abstract}
\thispagestyle{empty}
\newpage
\setcounter{page}{1}
\hypersetup{pageanchor=true}

\section{Introduction}\label{sec:intro}

Given a set $P$ of $n$ points in the plane, we would like to color each point by $+1$ or $-1$ so that every axis-parallel rectangle contains nearly the same number of points of each color. For a coloring $\chi:P\to\pmone$, the \emph{imbalance} of a rectangle $R$ is $\bigl|\sum_{p\in P\cap R}\chi(p)\bigr|$. The \emph{discrepancy} of $P$ is the minimum, over all colorings, of the maximum imbalance over all axis-parallel rectangles. Write $\Delta_d(n)$ for the largest discrepancy of an $n$-point set in $\R^d$ with respect to axis-parallel boxes. \emph{Tusn\'ady's problem in the plane} asks for the asymptotic order of $\Delta_2(n)$.

\paragraph{Main result.}
Consider $n$ points $P_i=(U_i,V_i)$ in $[0,1]^2$ and a coloring $\chi=(\chi_1,\dots,\chi_n)\in\pmone^n$, where $\chi_i$ is the color of $P_i$. Let $F_\chi(x,y)=\sum_{i=1}^n\chi_i\,\1\{P_i\in[0,x]\times[0,y]\}$, so that $|F_\chi(x,y)|$ is the imbalance of the anchored rectangle $[0,x]\times[0,y]$. We write $\Fnorm{F_\chi}$ for the supremum of $|F_\chi|$ over $[0,1]^2$. Logarithms are natural unless a base is specified.

\begin{theorem}\label{thm:main}
For every $A>0$, there is a constant $c_A>0$ such that the following holds for every integer $n\ge1$. If $P_1,\dots,P_n$ are independent uniform points in $[0,1]^2$, then
\[
\Pr\Bigl[\text{for every }\chi\in\pmone^n,\ \Fnorm{F_\chi}\ge c_A(\log_2n)^{3/2}\Bigr]\ge1-e^{-An}.
\]
\end{theorem}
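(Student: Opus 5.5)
We follow the strategy sketched in the abstract. Reveal the first coordinates $U_i$ completely, and reveal the second coordinates $V_i$ one binary digit at a time. Set $L=\lfloor \log_2 n\rfloor$ and work with $K=\lfloor \delta L\rfloor$ digit levels, where $\delta=\delta_A>0$ is small. Let $\F_k$ be the $\sigma$-field generated by all $U_i$ and the first $k$ binary digits of every $V_i$. At level $k$ the square splits into horizontal strips of height $2^{-k}$. Given $\F_k$, the points in a fixed strip are known, and each of them falls independently into the lower or upper half of its strip with probability $1/2$.

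For a fixed coloring $\chi$, define an \emph{oscillation potential} $\Phi_k(\chi)$ measurable with respect to $\F_k$. A natural choice is
\[
\Phi_k=\sum_{j<2^k}\ \text{(mean square over a grid of $x$ of the partial sums } F_\chi(x,\cdot)\text{ at dyadic heights } j2^{-k}),
\]
suitably normalized. It should satisfy two things. First, $\Phi_K\lesssim \Fnorm{F_\chi}^2$. Second, the conditional increment obeys
\[
\E[\Phi_{k+1}-\Phi_k\mid\F_k]\ \ge\ c\,\sqrt{k}\cdot(\text{something}),
\]
or, in the normalization I would aim for, a gain of order $\sqrt{L}$ per level, so that the gains sum to $\gtrsim L^{3/2}$. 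The source of the gain is the following. Refining a strip splits its points at random into two halves, and this creates a fresh $\pm$ fluctuation of the horizontal partial-sum function $x\mapsto\sum_{i\in\text{strip},\,U_i\le x}\chi_i\,\epsilon_i$. Its supremum over $x$ is of order the square root of the number of points in the strip, by a Kolmogorov/Lévy-type argument, once we work at the scale where a strip has about $\log n$ points per $x$-cell. The digits of the $U_i$ provide $\log n$ independent scales in $x$, and combining these in quadrature is what should yield the extra $\sqrt{\log n}$ compared with the trivial $\log n$. Concretely, I would use the midpoint/oscillation quantity $\osc$ of $F_\chi$ on dyadic $x$-intervals. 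I would show the expected new oscillation created at level $k$ is $\gtrsim \sqrt{L}$ uniformly in $\chi$, using only that $\chi_i\in\pmone$ and that the $U_i$ are well spread. The well-spread property is an event of probability $1-e^{-2An}$ by Chernoff bounds on dyadic boxes.

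Summing gives $\sum_k\E[\Phi_{k+1}-\Phi_k\mid\F_k]\ge c L^{3/2}$ on the good event. The actual $\Phi_K-\Phi_0$ is a martingale plus this predictable drift. The martingale increments have bounded differences: changing one digit of one $V_i$ changes $\Phi_{k+1}$ by $O(\text{bounded})$ after we truncate the potential at level $C L^{3/2}$, which is harmless because otherwise $\Fnorm{F_\chi}$ is already large. By Azuma–McDiarmid over the $nK$ revealed bits, the probability that the martingale part falls below $-\tfrac c2 L^{3/2}$ is at most $\exp(-c' L^3/(nK\cdot b^2))$. For this to beat the union bound $2^n$ we need the deviation $\exp(-(A+\log 2)n)$. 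So the bounded-differences constant must scale like $b\approx L^{3/2}/(n\sqrt{K})^{1/2}$ relative to the total, which is why the potential has to be normalized per point, i.e.\ averaged. Choosing $\delta$ small and $c_A$ accordingly then gives $\Pr[\exists\chi:\Fnorm{F_\chi}<c_A L^{3/2}]\le 2^n e^{-(A+1)n}+e^{-2An}\le e^{-An}$ for large $n$. Small $n$ are absorbed into $c_A$, since a single point already gives imbalance $1$.

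The main obstacle is designing $\Phi$ so that both properties hold at once. The drift must be $\gtrsim\sqrt{L}$ per level uniformly over all $2^n$ colorings, which is a deterministic lower bound conditional on $\F_k$. At the same time, each single bit must move $\Phi$ by only about $L^{3/2}/\sqrt{nK}$ times a constant, which is what the union bound requires. My first attempt would be $\Phi_k=\frac1n\sum_{\text{cells}}(\osc F_\chi)^2$ over a dyadic grid, using convexity, $\E[(a+\xi)^2]=a^2+\E\xi^2$, to extract the gain. I would then take a square root at the end to compare with $\Fnorm{F_\chi}$.
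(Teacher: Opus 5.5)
There is a genuine gap at the heart of the argument. Your filtration is the paper's with the coordinates swapped, and the drift-plus-martingale plan is the right one. However, you never define $\Phi$ or prove the per-level gain. You flag this as ``the main obstacle'', but it is the whole difficulty, and the mechanism you propose for it cannot reach $\log^{3/2}n$. With $\Phi_k=\frac1n\sum(\osc F_\chi)^2$ and the gain taken from $\E[(a+\xi)^2]=a^2+\E\xi^2$, the fresh fluctuations add in quadrature. In your own normalization a level adds about $\E\xi^2\asymp m\asymp L$ to a squared oscillation, so after $K\asymp L$ levels the squared oscillations are of order $L^2$. The final square root then gives only $\Fnorm{F_\chi}\gtrsim L$, which is the known $\Omega(\log n)$. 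Your remark that combining the scales in quadrature yields the extra $\sqrt{\log n}$ has it backwards: $L^{3/2}=K\sqrt m$ needs gains of size $\sqrt m$ that add \emph{linearly}. A first-power potential measured on the same cells before and after does not fix this either. For mean-zero $g$ one has $\E\osc_B(f+g)\ge\osc_Bf$, but the excess can vanish. If $f$ is monotone on $B$ with all steps larger than $2\max|g|$, then $\E\osc_B(f+g)=\osc_Bf$ exactly.

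The missing idea is the paper's change of cell shape together with \cref{lem:comparison}. The $bM$ cells of stage $j$ have width $b^{-j}$ and height $b^{j-h}$. So a comparison rectangle $I\times J$ consists both of $b$ wide, short old cells $I\times B_\ell$ and of $b$ narrow, tall new cells $C_k\times J$, and $b\osc_Jf\ge\sum_\ell\osc_{B_\ell}f+2|\midp_{B_b}f-\midp_{B_1}f|$. The new potential therefore sees the offset between the top and bottom blocks, which the old one, being shift invariant on each block, never recorded. The endpoint average (weight $1/2$ inside $I$) makes $\E\bar a=a$, and averaging the children turns $\bar a$ into $a$ plus independent symmetric jumps. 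The jumps of the points of $\Rin$ move the offset by a symmetric $D$, so $\E|\theta+D|\ge\E|D|\gtrsim\sqrt{m_R}$ whatever the existing offset $\theta$. This is a first-order gain that adds linearly over the levels.

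Two further steps also fail as written. First, with $\Theta(\log n)$ expected points per cell, a fixed cell is empty with probability $n^{-O(1)}$. So no Chernoff argument makes all cells well spread with probability $1-e^{-2An}$. The paper asks only that half the points lie in non-heavy inner regions and uses concavity of $\sqrt{\cdot}$. It then controls crowding by a union bound over sets of heavy regions, at cost $(48/b)^{n/2}$ with $b$ large in terms of $A$. Second, for Azuma over $nK$ bits to beat $2^n$ at a deviation of order $L^{3/2}$, one bit may move the potential by only $O(L^{3/2}/(n\sqrt K))=O(L/n)$. Your figure $L^{3/2}/\sqrt{nK}$ gives an exponent of order $1$, and truncation does not supply the needed bound. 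The paper gets a per-digit change of $1/M$ with $M\asymp_b n/h$, because $Z_{j+1}$ averages $bM$ cells and, by shift invariance, a point's new digit affects only the $b$ cells that contain its height.
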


The parameter $A$ allows any fixed exponential failure rate, at the cost of a constant $c_A$ that depends on $A$. We do not optimize $c_A$; it is very small, so the bound is of interest only for large $n$. Since $1-e^{-An}>0$ and the points are almost surely distinct, there is an $n$-point set for which every coloring leaves an anchored rectangle with imbalance at least $c_A(\log_2n)^{3/2}$. Anchored rectangles are a subfamily of all rectangles, so together with Nikolov's upper bound~\cite{Nik17} we obtain:

\begin{corollary}[Tusn\'ady's problem in the plane]\label{cor:main}
The worst-case combinatorial discrepancy of $n$ points in the plane with respect to axis-parallel rectangles is $\Theta(\log^{3/2}n)$.
\end{corollary}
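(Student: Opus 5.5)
The corollary follows by combining \cref{thm:main} for the lower bound with Nikolov's upper bound~\cite{Nik17}; no new probabilistic argument is needed. I would prove the two inequalities separately. Throughout, $n\ge2$ is understood, since $\log n$ vanishes at $n=1$ and $\Theta$ is an asymptotic statement.

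\emph{Upper bound.} Nikolov~\cite{Nik17} shows that every $n$-point set in $\R^2$ admits a coloring in which every axis-parallel rectangle has imbalance $O(\log^{3/2}n)$. Hence $\Delta_2(n)=O(\log^{3/2}n)$. I would only check that his result covers arbitrary point sets, including sets with repeated coordinates, and all closed, open and half-open rectangles. If it is stated only for general position, a standard perturbation handles the rest: move the points slightly so that no coordinates coincide and every rectangle's point set is preserved by some rectangle for the perturbed set. This does not increase the discrepancy of the original set.

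\emph{Lower bound.} Fix $A=1$ and let $c=c_1$ be the constant from \cref{thm:main}. For each $n\ge1$, the event in the theorem has probability at least $1-e^{-n}>0$. The event that two of the $P_i$ coincide has probability zero. So some realization $P=\{P_1,\dots,P_n\}$ consists of $n$ distinct points and satisfies $\Fnorm{F_\chi}\ge c(\log_2 n)^{3/2}$ for every $\chi\in\pmone^n$.

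The remaining step is that a large value of $\Fnorm{F_\chi}$ really gives a large rectangle imbalance. The function $F_\chi(x,y)$ takes only finitely many values, and its supremum is attained at some point $(x,y)\in[0,1]^2$. At that point, $|F_\chi(x,y)|$ is exactly the imbalance of the closed anchored rectangle $[0,x]\times[0,y]$, which is an axis-parallel rectangle. Thus every coloring of $P$ has a rectangle with imbalance at least $c(\log_2n)^{3/2}=c(\log 2)^{-3/2}\log^{3/2}n$, and so $\Delta_2(n)=\Omega(\log^{3/2}n)$.

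Together with the upper bound, this gives $\Delta_2(n)=\Theta(\log^{3/2}n)$. The only point needing care, and the main obstacle, is the compatibility of conventions. The sets must be distinct points, which is handled by the almost-sure distinctness above. The rectangle family must contain the closed anchored rectangles used in $F_\chi$; since anchored rectangles are a subfamily of all rectangles, restricting to them can only lower the discrepancy. Neither issue requires more than these remarks.
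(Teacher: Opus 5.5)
Your proposal is correct and follows the paper's argument: a realization with distinct points exists because the event of \cref{thm:main} has positive probability, anchored rectangles are a subfamily of all axis-parallel rectangles, and Nikolov's bound~\cite{Nik17} supplies the matching upper bound. Your extra remarks, that the supremum of $|F_\chi|$ is attained and that a perturbation reduces the upper bound to points in general position, only spell out details the paper leaves implicit.
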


\paragraph{Previous bounds.}
Tusn\'ady asked whether $\Delta_2(n)$ is bounded by a constant. Beck~\cite{Bec81} answered this question negatively and gave an upper bound of $O(\log^4n)$. The lower bound $\Delta_2(n)=\Omega(\log n)$ follows from Schmidt's theorem on continuous discrepancy~\cite{Sch72} by transference, as used by Beck~\cite{Bec81}; see also \cite{Mat99,ABN18}. A sequence of improvements (\cref{tab:bounds}) led to the upper bound of $O(\log^{3/2}n)$, which Nikolov~\cite{Nik17} proved using factorization norms and Banaszczyk's theorem on signed series~\cite{Ban12}. This left a factor of $\sqrt{\log n}$ between the bounds, and Bansal~\cite{Ban22} conjectured that the lower bound is tight. \Cref{thm:main} closes this gap in favor of the upper bound: random point sets attain Nikolov's bound up to a constant factor.

\begin{table}[H]
\centering
\begin{tabular}{@{}lll@{}}
\toprule
bound on $\Delta_2(n)$ & reference & remark\\
\midrule
$O(\log^4n)$ & Beck~\cite{Bec81} & \\
$O(\log^{3.5+\eps}n)$ & Beck~\cite{Bec89} & every fixed $\eps>0$\\
$O(\log^3n)$ & Bohus~\cite{Boh90} & \\
$O(\log^{2.5}n)$ & Srinivasan~\cite{Sri97} & \\
$O(\log^2n)$ & Bansal and Garg~\cite{BG17} & constructive\\
$O(\log^{3/2}n)$ & Nikolov~\cite{Nik17} & \\
\midrule
$\Omega(\log n)$ & Schmidt~\cite{Sch72} & via transference \cite{Bec81,Mat99,ABN18}\\
$\Omega(\log n)$ & Matou\v{s}ek and Nikolov~\cite{MN15} & factorization norms\\
$\Omega(\log^{3/2}n)$ & \textbf{this paper} & random point sets\\
\bottomrule
\end{tabular}
\caption{Bounds for the combinatorial discrepancy of rectangles in the plane.}
\label{tab:bounds}
\end{table}

\paragraph{The proof in four steps.}
Fix a coloring $\chi$ and a large constant $b$, and write $F=F_\chi$. We reveal the horizontal coordinates of the points one base-$b$ digit at a time, while keeping all vertical coordinates visible. Let $\F_j$ be the information after $j$ digits, that is, all vertical coordinates and the first $j$ digits of every horizontal coordinate. We reveal $h=\Theta_b(\log n)$ digits, and call the passage from $\F_j$ to $\F_{j+1}$ a \emph{transition}.
\begin{figure}[!t]
\centering
\begin{tikzpicture}[x=3.2cm,y=3.2cm,font=\small]
  \begin{scope}
    \fill[black!12] (0,0.25) rectangle (1,0.5);
    \draw[thick] (0,0) rectangle (1,1);
    \foreach \t in {1,...,15} \draw (0,\t/16) -- (1,\t/16);
    \draw[very thick] (0,0.25) rectangle (1,0.5);
    \node[below] at (0.5,-0.02) {stage $0$};
  \end{scope}
  \node at (1.2,0.5) {$\longrightarrow$};
  \begin{scope}[xshift=4.5cm]
    \fill[black!12] (0,0.25) rectangle (1,0.5);
    \draw[thick] (0,0) rectangle (1,1);
    \foreach \t in {1,2,3} { \draw (0,\t/4) -- (1,\t/4); \draw (\t/4,0) -- (\t/4,1); }
    \draw[very thick] (0,0.25) rectangle (1,0.5);
    \node[below] at (0.5,-0.02) {stage $1$};
  \end{scope}
  \node at (2.61,0.5) {$\longrightarrow$};
  \begin{scope}[xshift=9cm]
    \draw[thick] (0,0) rectangle (1,1);
    \foreach \t in {1,...,15} \draw (\t/16,0) -- (\t/16,1);
    \node[below] at (0.5,-0.02) {stage $2$};
  \end{scope}
\end{tikzpicture}
\caption{The partitions, drawn for $b=4$ and $h=2$. Every stage has $b^h=16$ cells, which become narrower and taller as more digits are revealed. The shaded region is one of the four comparison rectangles of the transition $0\to1$: a union of $b$ cells of stage~$0$ and also of $b$ cells of stage~$1$.}
\label{fig:stages}
\end{figure}
\begin{enumerate}[label=(\arabic*),itemsep=2pt]
\item \emph{Potential.} At stage $j$, that is, after $j$ digits have been revealed, the unit square is partitioned into cells of width $b^{-j}$ and height $b^{j-h}$ (\cref{fig:stages}). For a horizontal interval $I=[I^-,I^+]$, let $a^{(I)}(y)=\frac12\bigl(F(I^-,y)+F(I^+,y)\bigr)$ be the mean of the two values of $F$ at its endpoints, as a function of the height $y$. The \emph{oscillation} of a function $f$ on a vertical interval $B$ is $\osc_Bf=\max_Bf-\min_Bf$. The potential is the average oscillation of the $b^h$ cells:
\begin{equation}\label{eq:Zdef}
Z_j=\frac1{b^h}\sum_{\text{stage-$j$ cells }I\times B}\osc_Ba^{(I)}.
\end{equation}
It is determined by $\F_j$, and it is at most $2\Fnorm F$. So it suffices to show that $Z_h$ is large. The \emph{conditional gain} of a transition is $\E[Z_{j+1}-Z_j\mid\F_j]$. Steps~(2) and~(3) show that the conditional gains of the $h$ transitions add up to a large number, and step~(4) shows that $Z_h$ does not fall far below their sum.
\item \emph{Local gain} (\cref{lem:local,lem:drift}). A comparison rectangle is a union of $b$ old cells, from before a transition, and also of $b$ new cells, from after it. In conditional expectation given $\F_j$, the oscillations of the new cells add up to at least those of the old cells plus $\frac14\sqrt{m_R}$, where $m_R$ is the number of points in the \emph{inner region}, that is, the rectangle without its bottom and top old cells. The gain comes from the newly revealed digits of these $m_R$ points. After averaging over the $b$ new cells, they shift the values on the top old cell against those on the bottom old cell by a symmetric random amount of order $\sqrt{m_R}$, and only the new cells are high enough to register this shift. This comparison uses the opposing coordinate scales of Talagrand's induction \cite[Section~3]{Tal94}.
\item \emph{Good transitions} (\cref{lem:crowd}). The bound of step~(2) is weak if the points crowd into few inner regions, since $\sqrt{m_R}$ grows slowly. Call an inner region \emph{heavy} if it holds more than $b$ times the average number of points of a comparison rectangle, and call a transition \emph{good} if at least half of the points lie in inner regions that are not heavy. If a transition is not good, more than half of the points lie in a set of area at most $3/b$. When a transition has at most $n$ comparison rectangles, a union bound shows that this is exponentially unlikely. So, with overwhelming probability, all $h$ transitions are good. This event depends only on the points, so it is common to every coloring.
\item \emph{All colorings} (\cref{lem:fluct,prop:sim}). At each transition, the newly revealed digit of one point changes the new potential very little. Hence, by bounded differences, $Z_h$ is unlikely to fall far below the sum of the conditional gains, and for large $b$ a union bound over the $2^n$ fixed colorings gives a statement for all colorings, including those chosen after observing the point set.
\end{enumerate}
We choose $h$ so that a comparison rectangle holds $\Theta_b(h)$ points on average. Then the conditional gain of a good transition is $\Omega_b(\sqrt h)$, and on the event of step~(3) these conditional gains add up to $\Omega_b(h^{3/2})$. A denser choice, with more points per comparison rectangle, would increase each gain. But a transition has $M=b^{h-1}$ comparison rectangles, and for a deviation of the order of the total gain, the fluctuation bound of step~(4) has an exponent of order $hM$ for fixed $b$, whatever the density $n/M$ is. This exponent must exceed $n\log2$ to pay for the union bound. So a density $n/M$ of order $h$ is the largest that the argument allows.

\begin{table}[!b]
\centering
\small
\renewcommand{\arraystretch}{1.2}
\begin{tabular}{@{}lll@{}}
\toprule
Symbol & Meaning & Defined in\\
\midrule
$R=I\times J$ & a rectangle; from \cref{sec:scales} on, a comparison rectangle & \cref{sec:setup}\\
$B_1,\dots,B_b$ & blocks of $J$, from bottom to top & \cref{sec:blocks}\\
$C_1,\dots,C_b$ & children of $I$, from left to right & \cref{sec:onerect}\\
$\Rin$, $m_R$ & inner region of $R$, and the number of points in it & \cref{sec:blocks}\\
\addlinespace[5pt]
$F=F_\chi$ & signed count of the points in $[0,x]\times[0,y]$ & Section~1\\
$\osc_Bf$, $\midp_Bf$ & oscillation and midpoint of $f$ on the vertical interval $B$ & \eqref{eq:osc}\\
$a^{(I)}$; $a$ & endpoint average of $I$; written $a$ while $I$ is fixed (\cref{sec:setup,sec:blocks,sec:onerect}) & \eqref{eq:a}\\
$a^{(C_k)}$; $a_k$ & endpoint average of the child $C_k$; written $a_k$ while $I$ is fixed & \cref{sec:onerect}\\
$\bar a$ & average of the child functions $a_1,\dots,a_b$ & \cref{sec:onerect}\\
$X_i$, $D$ & jump of $\bar a-a$ at the height of $P_i$; sum of these jumps in $\Rin$ & \cref{sec:onerect}\\
\addlinespace[5pt]
$h$, $M=b^{h-1}$ & number of transitions; number of comparison rectangles & \cref{sec:scales}\\
$Z_j$, $\F_j$ & potential at stage $j$; information after $j$ digits & \cref{sec:scales}\\
$T$ & fluctuation term & \eqref{eq:roadmap}\\
$\delta_j$ & lower bound for the conditional gain & \eqref{eq:drift}\\
$\delta_*$ & lower bound for $\delta_j$ at a good transition & \eqref{eq:d}\\
$G_j$, $G$ & the transition $j\to j+1$ is good; all transitions are good & \cref{sec:scales}\\
\bottomrule
\end{tabular}
\caption{Notation of \cref{sec:proof}.}
\label{tab:notation}
\end{table}

\section{The Proof}\label{sec:proof}

\subsection{Setup}\label{sec:setup}

Throughout this section, $P_1,\dots,P_n$ are independent uniform points in $[0,1]^2$. Fix a coloring $\chi$ and write $F=F_\chi$. Let $b\ge3$ be an integer; it is the base in which we write the horizontal coordinates. We call $V_i$ the \emph{height} of $P_i$, and we say that $P_i$ lies in a horizontal interval $I$ if $U_i\in I$. We work on the probability-one event that all heights are distinct and that no coordinate of a point has the form $k/b^r$ with integers $k$ and $r\ge0$. All intervals of \cref{sec:scales} have endpoints of this form. \Cref{tab:notation} lists the notation.

Until the end of \cref{sec:onerect}, we fix one rectangle $R=I\times J$ (\cref{fig:local}), with a horizontal interval $I=[I^-,I^+]$ and a vertical interval $J$. In \cref{sec:scales}, it will be a comparison rectangle of \cref{fig:stages}.

\paragraph{Endpoint average.}
The \emph{endpoint average} of $I$ is the function of the height
\begin{equation}\label{eq:a}
a(y)=\frac{F(I^-,y)+F(I^+,y)}2=\sum_i\chi_i\,w_i\,\1\{V_i\le y\},
\end{equation}
where the \emph{weight} $w_i$ of $P_i$ in $I$ is $1$ if $P_i$ lies to the left of $I$, $1/2$ if it lies in $I$, and $0$ if it lies to the right of $I$. Indeed, among the points of height at most $y$, $F(I^-,y)$ counts those to the left of $I$, each with its color, and $F(I^+,y)$ counts these points and also those in $I$. The endpoints of $I$ do not matter, since almost surely no point lies on an endpoint of the intervals we use. So $a$ depends on the horizontal coordinates only through the position of each point relative to $I$. It is the function $a^{(I)}$ of \cref{sec:intro}; while $I$ is fixed, we omit the superscript. The weight $1/2$ of a point in $I$ is what makes the averaging step of \cref{sec:onerect} unbiased; see \eqref{eq:avg}.

\begin{figure}[!htb]
\centering
\begin{tikzpicture}[x=0.56cm,y=0.35cm,font=\small]
  \fill[black!12] (0,1) rectangle (8,7);
  \draw[thick] (0,0) rectangle (8,8);
  \foreach \t in {1,...,7} \draw (0,\t) -- (8,\t);
  \node[left] at (0,0.5) {$B_1$};
  \node[left] at (0,7.5) {$B_b$};
  \node[left] at (0,3.5) {$B_\ell$};
  \fill[pattern=north east lines,pattern color=black!55] (0,3) rectangle (8,4);
  \node[fill=white,inner sep=1.5pt] at (4,3.5) {$I\times B_\ell$};
  \node[fill=black!12,inner sep=1.5pt] at (4,5.5) {$\Rin$};
  \draw[<->] (0,-0.7) -- node[below] {$I$} (8,-0.7);
  \draw[<->] (8.7,0) -- node[right] {$J$} (8.7,8);
\end{tikzpicture}
\caption{The rectangle $R=I\times J$ and its $b$ horizontal strips, drawn for $b=8$. The inner region $\Rin$ (shaded) is $R$ without its bottom and top strips.}
\label{fig:local}
\end{figure}

\paragraph{Oscillation and midpoint.}
For a function $f$ of the height and a vertical interval $B$, the \emph{oscillation} and the \emph{midpoint} of $f$ on $B$ are
\begin{equation}\label{eq:osc}
\osc_Bf=\max_Bf-\min_Bf,\qquad\midp_Bf=\frac{\max_Bf+\min_Bf}2,
\end{equation}
where $\max_Bf$ and $\min_Bf$ are the extrema of $f(y)$ over $y\in B$. Vertical intervals are closed at the bottom and open at the top. All functions of the height that we consider are finite linear combinations of the functions $y\mapsto\1\{V_i\le y\}$, up to an additive constant. They take finitely many values, so these extrema exist, and they are \mbox{$\midp_Bf\pm\frac12\osc_Bf$}. Clearly $\osc_B(cf)=|c|\osc_Bf$ for real $c$. The oscillation is \emph{shift invariant}: adding a constant $c$ to $f$ does not change $\osc_Bf$, whereas it adds $c$ to $\midp_Bf$. The oscillation also satisfies the triangle inequality
\[
\osc_B(f+g)\le\osc_Bf+\osc_Bg,
\]
because the maximum of $f+g$ on $B$ is at most $\max_Bf+\max_Bg$, and its minimum is at least $\min_Bf+\min_Bg$. By induction, it extends to finite sums. Applying it to $f=(f+g)+(-g)$ and using $\osc_B(-g)=\osc_Bg$ gives $\osc_Bf\le\osc_B(f+g)+\osc_Bg$. Together with the triangle inequality itself, this is the reverse triangle inequality
\[
|\osc_B(f+g)-\osc_Bf|\le\osc_Bg.
\]
Shift invariance is the reason why we work with oscillations and not with maxima. When one point moves horizontally, its effect on a vertical interval that does not contain its height is a constant shift, and the oscillation does not see it.

\subsection{A deterministic comparison}\label{sec:blocks}

Partition $J$ into blocks $B_1,\dots,B_b$ of equal length, from bottom to top. The rectangle $R$ consists of the \emph{horizontal strips} $I\times B_\ell$ (\cref{fig:local}). Set
\[
\Rin=I\times(B_2\cup\dots\cup B_{b-1}),
\]
and call $\Rin$ the \emph{inner region} of $R$. It is $R$ without its bottom and top strips, and it is not empty because $b\ge3$. We write $m_R$ for the number of points in $\Rin$.

The first lemma compares the oscillation on $J$ with the oscillations on the blocks.

\begin{lemma}[Vertical comparison]\label{lem:comparison}
Let $f$ be a function on $J$ with finitely many values. Then
\begin{equation}\label{eq:det}
b\,\osc_Jf\ \ge\ \sum_{\ell=1}^b\osc_{B_\ell}f+2\,\bigl|\midp_{B_b}f-\midp_{B_1}f\bigr|.
\end{equation}
\end{lemma}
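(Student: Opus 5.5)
We prove \eqref{eq:det} with the elementary bound that the oscillation of $f$ on a subinterval is at most its oscillation on the whole interval. Write $O=\osc_Jf$, $M=\max_Jf$ and $m=\min_Jf$. Every block $B_\ell$ is contained in $J$, so $\max_{B_\ell}f\le M$ and $\min_{B_\ell}f\ge m$. Hence $\osc_{B_\ell}f\le O$ for each $\ell$. Applying this to the $b-2$ inner blocks $B_2,\dots,B_{b-1}$ gives
\[
\sum_{\ell=2}^{b-1}\osc_{B_\ell}f\le(b-2)\,O.
\]
It then suffices to prove the two-block inequality
\[
\osc_{B_1}f+\osc_{B_b}f+2\,\bigl|\midp_{B_b}f-\midp_{B_1}f\bigr|\le 2\,O .
\]
Adding the two displays gives \eqref{eq:det}.

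For the two-block inequality, assume by symmetry (otherwise replace $f$ by $-f$, which preserves oscillations and the absolute difference of midpoints) that $\midp_{B_b}f\ge\midp_{B_1}f$. Writing out midpoints and oscillations in terms of extrema, the left side equals
\[
(\max_{B_1}f-\min_{B_1}f)+(\max_{B_b}f-\min_{B_b}f)+(\max_{B_b}f+\min_{B_b}f)-(\max_{B_1}f+\min_{B_1}f).
\]
This simplifies to $2\max_{B_b}f-2\min_{B_1}f$, which is at most $2M-2m=2O$. In the other case the same computation gives $2\max_{B_1}f-2\min_{B_b}f\le 2O$. This settles the claim.

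The only point needing care is the definitions. The extrema exist because $f$ takes finitely many values. The blocks partition $J$, with half-open intervals, so each block is a subset of $J$ and the monotonicity of max and min under inclusion applies. There is no real obstacle here: the key observation is that the pair $\osc+2|\Delta\midp|$ collapses to a difference of one block's maximum and the other block's minimum.
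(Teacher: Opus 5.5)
Your proof is correct and is essentially the paper's argument. The paper also bounds each of the $b-2$ inner block oscillations by $\osc_Jf$, and handles the end blocks through $\osc_Jf\ge\max_{B_b}f-\min_{B_1}f$ and its mirror image rewritten in terms of midpoints and oscillations, which is your identity that the two-block left side equals $2\max_{B_b}f-2\min_{B_1}f$; the paper takes the larger of the two bounds where you split into cases by the sign of the offset.
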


We call $\midp_{B_b}f-\midp_{B_1}f$ the \emph{offset} between the top block and the bottom block; the last term of \eqref{eq:det} is twice its absolute value. The offset is the source of the gain: adding a constant to $f$ on the top block changes the offset but none of the oscillations on the blocks, so among the terms of \eqref{eq:det} only the oscillation on $J$ can register it.

\begin{proof}
Since $J$ contains $B_1$ and $B_b$, and the extrema of $f$ on a block $B$ are \mbox{$\midp_Bf\pm\frac12\osc_Bf$},
\begin{align*}
\osc_Jf&\ \ge\ \max_{B_b}f-\min_{B_1}f=\frac{\osc_{B_1}f+\osc_{B_b}f}2+\bigl(\midp_{B_b}f-\midp_{B_1}f\bigr),\\
\osc_Jf&\ \ge\ \max_{B_1}f-\min_{B_b}f=\frac{\osc_{B_1}f+\osc_{B_b}f}2-\bigl(\midp_{B_b}f-\midp_{B_1}f\bigr).
\end{align*}
Taking the larger of the two right-hand sides,
\[
\osc_Jf\ \ge\ \frac{\osc_{B_1}f+\osc_{B_b}f}2+\bigl|\midp_{B_b}f-\midp_{B_1}f\bigr|.
\]
Also $\osc_Jf\ge\osc_{B_\ell}f$ for each of the remaining $b-2$ blocks $B_\ell$. Adding these $b-2$ inequalities and twice the last display proves \eqref{eq:det}.
\end{proof}

\subsection{Horizontal refinement and the random gain}\label{sec:onerect}

Now partition $I$ into children $C_1,\dots,C_b$ of equal length, from left to right. The same rectangle $R$ also consists of the \emph{vertical strips} $C_k\times J$ (\cref{fig:gain}, left). Let $a_k$ be the endpoint average of $C_k$, defined as in \eqref{eq:a} with $C_k$ in place of $I$. The word \emph{old} always refers to $I$ before this refinement: we call $a$ the \emph{old function} and $a_1,\dots,a_b$ the \emph{child functions}.

\emph{Local experiment:} fix all colors, all heights, and the position of every point relative to $I$, that is, whether it lies to the left of $I$, in $I$, or to the right of $I$. Each point in $I$ falls into one of the children $C_1,\dots,C_b$, uniformly at random and independently of the other points. In this experiment the old function is fixed, and the child functions are random. Nothing else needs to be specified: a point outside $I$ lies on the same side of every child, so the child functions depend on the horizontal coordinates only through the positions relative to $I$ and the children of the points in $I$.

The gain will come from a sum of independent \emph{symmetric} random variables, that is, variables that have the same distribution as their negatives. The following fact bounds the expected absolute value of such a sum from below, even after a shift by an arbitrary number $\theta$. For uniform signs it is Khintchine's inequality with a weaker constant; the fourth-moment hypothesis holds for uniform signs, and we will check it for the variables that we need. We prove the fact in \cref{app:sym}.

\begin{fact}[Sums of symmetric variables]\label{fact:sym}
Let $X_1,\dots,X_m$ be independent symmetric random variables. Suppose that $\E X_i^2=\sigma^2$ and $\E X_i^4\le3\sigma^4$ for every $i$, where $\sigma>0$. Then
\[
\E|\theta+X_1+\dots+X_m|\ \ge\ \sigma\sqrt{m/3}\qquad\text{for every real }\theta.
\]
\end{fact}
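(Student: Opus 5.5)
The plan is to bound the first moment from below by the second and fourth moments, via the Hölder-type inequality $\E Y^2\le(\E|Y|)^{2/3}(\E Y^4)^{1/3}$. Set $S=X_1+\dots+X_m$ and $Y=\theta+S$. Writing $Y^2=|Y|^{2/3}\cdot|Y|^{4/3}$ and applying Hölder with exponents $3/2$ and $3$ gives this inequality. Rearranged, it reads
\[
\E|Y|\ \ge\ \frac{(\E Y^2)^{3/2}}{(\E Y^4)^{1/2}},
\]
valid whenever $0<\E Y^4<\infty$. All moments involved are finite because each $\E X_i^4\le3\sigma^4$. Also $\E Y^4>0$, since $\E Y^2\ge m\sigma^2>0$.

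Next I will compute the moments of $Y$, using symmetry to kill odd terms. Each $X_i$ is symmetric with finite fourth moment, so $\E X_i=\E X_i^3=0$. By independence, $S$ is again symmetric, hence $\E S=\E S^3=0$. Thus
\[
\E Y^2=\theta^2+m\sigma^2,\qquad \E Y^4=\theta^4+6\theta^2\E S^2+\E S^4=\theta^4+6\theta^2m\sigma^2+\E S^4.
\]
Expanding $S^4$ and using independence and $\E X_i=\E X_i^3=0$, only the terms $X_i^4$ and $X_i^2X_j^2$ with $i\ne j$ survive. This gives
\[
\E S^4=\sum_i\E X_i^4+3\sum_{i\ne j}\sigma^4\le3m\sigma^4+3m(m-1)\sigma^4=3m^2\sigma^4,
\]
where the fourth-moment hypothesis enters. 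Hence
\[
\E Y^4\le\theta^4+6\theta^2m\sigma^2+3m^2\sigma^4\le3(\theta^2+m\sigma^2)^2.
\]

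Finally, substituting into the first display gives
\[
\E|Y|\ge\frac{(\theta^2+m\sigma^2)^{3/2}}{\sqrt3\,(\theta^2+m\sigma^2)}=\sqrt{(\theta^2+m\sigma^2)/3}\ge\sigma\sqrt{m/3},
\]
uniformly in $\theta$. The only point needing care is the shift $\theta$. The ratio bound must not deteriorate as $\theta$ grows, and it does not, because the comparison $\E Y^4\le3(\E Y^2)^2$ holds for every $\theta$: the cross term $6\theta^2m\sigma^2$ is exactly matched by the expansion of $3(\theta^2+m\sigma^2)^2$. This is where symmetry matters, since it removes the odd-moment cross terms $4\theta^3\E S$ and $4\theta\E S^3$. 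I expect no further obstacle.
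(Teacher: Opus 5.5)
Your proof is correct, and its core is the same as the paper's: H\"older with exponents $3/2$ and $3$ gives $\E|Y|\ge(\E Y^2)^{3/2}/(\E Y^4)^{1/2}$, and the pair count gives $\E S^4\le3m^2\sigma^4$.

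You differ only in how you handle the shift. The paper removes $\theta$ before computing any moments. Since $S$ and $-S$ have the same distribution, $\E|\theta+S|=\E|\theta-S|$, and averaging with $|\theta+x|+|\theta-x|\ge2|x|$ gives $\E|\theta+S|\ge\E|S|$. The moment argument is then run only for the centered sum. You instead run the moment argument on $\theta+S$ itself and check that $\E Y^4\le3(\E Y^2)^2$ survives the shift. This works because the odd cross terms $4\theta^3\E S$ and $4\theta\E S^3$ vanish and $\theta^4\le3\theta^4$.

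Each route buys something. Yours uses symmetry only through $\E X_i=\E X_i^3=0$, so it holds verbatim under that weaker hypothesis, and it gives the slightly stronger bound $\sqrt{(\theta^2+m\sigma^2)/3}$. The paper's uses the full distributional symmetry, but its reduction needs no moments and keeps the fourth-moment computation free of cross terms.

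One small slip: you justify $\E Y^4>0$ by $\E Y^2\ge m\sigma^2>0$, which fails when $m=0$ and $\theta=0$. That case is trivial ($0\ge0$), and the paper sets it aside explicitly.
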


The next lemma is the heart of the proof. Cutting $R$ into vertical strips instead of horizontal ones increases the total oscillation in expectation, by at least $\frac14\sqrt{m_R}$, where $m_R$ is the number of points in the inner region.

\begin{lemma}[Local gain]\label{lem:local}
In the local experiment,
\[
\E\sum_{k=1}^b\osc_Ja_k\ \ge\ \sum_{\ell=1}^b\osc_{B_\ell}a+\frac14\sqrt{m_R}.
\]
\end{lemma}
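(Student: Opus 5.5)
The plan is to pass from the child functions to their average $\bar a=\frac1b\sum_{k=1}^b a_k$, then apply \cref{lem:comparison} to $\bar a$ and take expectations term by term.

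\emph{Step 1 (reduction to $\bar a$).} The oscillation satisfies $\osc_J(cf)=|c|\osc_Jf$ and the triangle inequality for finite sums. Applied to $b\bar a=\sum_k a_k$, these give $\sum_k\osc_Ja_k\ge b\,\osc_J\bar a$. \Cref{lem:comparison} with $f=\bar a$ then yields, pointwise in the local experiment,
\[
\sum_{k=1}^b\osc_Ja_k\ \ge\ \sum_{\ell=1}^b\osc_{B_\ell}\bar a+2\bigl|\midp_{B_b}\bar a-\midp_{B_1}\bar a\bigr|.
\]
It therefore suffices to prove two bounds: $\E\osc_{B_\ell}\bar a\ge\osc_{B_\ell}a$ for every $\ell$, and $2\,\E|\midp_{B_b}\bar a-\midp_{B_1}\bar a|\ge\frac14\sqrt{m_R}$.

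\emph{Step 2 (structure of $\bar a-a$).} A point outside $I$ has the same weight in every child as in $I$, so it contributes identically to $\bar a$ and to $a$. Now take a point $P_i$ in $I$ that falls into child $C_{k'}$. Its weight is $1$ in the $b-k'$ children to its right, $\tfrac12$ in $C_{k'}$, and $0$ in the remaining children. Its average weight is therefore $(b-k'+\frac12)/b$. Hence
\[
\bar a-a=\sum_{i:\,P_i\in I}X_i\,\1\{V_i\le y\},\qquad X_i=\chi_i\Bigl(\tfrac12-\tfrac{k'_i-1/2}{b}\Bigr).
\]
Here the $k'_i$ are independent and uniform on $\{1,\dots,b\}$. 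Each $X_i$ is symmetric because $k'\mapsto b+1-k'$ flips its sign, so $\E X_i=0$ (this is \eqref{eq:avg}). In particular $\E\bar a(y)=a(y)$ for every $y$. The oscillation on a block is a supremum of the linear functionals $f\mapsto f(y)-f(y')$, so it is convex, and Jensen's inequality gives $\E\osc_{B_\ell}\bar a\ge\osc_{B_\ell}(\E\bar a)=\osc_{B_\ell}a$. This settles the block terms.

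\emph{Step 3 (the offset, the main obstacle).} The task is to write the offset as $\theta+D$ with $D=\sum_{P_i\in\Rin}X_i$ and $\theta$ independent of $D$. On $B_1$, the function $\bar a$ equals $a$ plus the constant $g_0=\sum X_i$ over points of $I$ below $J$, plus the jumps of points in $I\times B_1$. On $B_b$ it additionally carries the constant shift $D$ from the inner points, because every inner point lies below all of $B_b$. Shift invariance of the midpoint thus gives
\[
\midp_{B_b}\bar a-\midp_{B_1}\bar a=D+\theta,
\]
where $\theta$ depends only on the $X_i$ of points of $I$ that lie outside $\Rin$. Conditioning on those variables, \cref{fact:sym} applies to the $m_R$ independent symmetric summands of $D$. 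A direct computation gives $\sigma^2=\E X_i^2=\frac{1-b^{-2}}{12}\ge\frac{2}{27}$ for $b\ge3$. For the fourth moment, $\E X_i^4\le3\sigma^4$ should follow from the explicit power sums of the discrete uniform distribution on the points $\frac{k-1/2}{b}$; its kurtosis is below $9/5<3$, and this is the part I would verify carefully. With these, $2\,\E|\theta+D|\ge2\sigma\sqrt{m_R/3}\ge\frac{2\sqrt2}{9}\sqrt{m_R}\ge\frac14\sqrt{m_R}$. Combined with Steps~1 and~2, this proves the lemma. The care needed is mainly in Step~3: points in the bottom and top strips must be assigned to $\theta$, which is exactly why only $m_R$ points, not all points of $R$, produce the gain.
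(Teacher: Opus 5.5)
Your proof is correct and follows the paper's argument essentially step for step: average the children, apply \cref{lem:comparison} to $\bar a$, use $\E\bar a=a$ for the block terms, and write the offset as $\theta+D$ with $D$ the symmetric sum over $\Rin$, to which \cref{fact:sym} applies after conditioning on the children of the other points. The fourth-moment condition you left to check does hold: the kurtosis of a discrete uniform law on $b$ equally spaced points is $\frac{3(3b^2-7)}{5(b^2-1)}<\frac95$. The paper gets it in one line from $|X_i|\le\frac{b-1}{2b}$, namely $\E X_i^4\le\bigl(\frac{b-1}{2b}\bigr)^2\sigma^2\le\frac{b^2-1}{4b^2}\sigma^2=3\sigma^4$.
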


\begin{figure}[t]
\centering
\begin{tikzpicture}[font=\small]
  \begin{scope}[x=1cm,y=0.46cm]
    \foreach \t in {0,1,7,8} \draw[black!40] (3.68,\t) -- (8.15,\t);
  \end{scope}
  \begin{scope}[x=0.46cm,y=0.46cm]
    \fill[black!12] (0,1) rectangle (8,7);
    \foreach \t in {1,...,7} \draw[black!22] (\t,0) -- (\t,8);
    \foreach \t in {1,...,7} \draw[black!45] (0,\t) -- (8,\t);
    \draw[thick] (0,0) rectangle (8,8);
    \foreach \x/\y in {2.4/0.35,6.6/0.72,0.5/1.35,7.5/2.7,1.6/3.2,4.5/4.5,0.8/4.95,5.6/5.4,3.6/6.3}
      \fill (\x,\y) circle (1.5pt);
    \foreach \x/\y in {5.3/1.8,3.4/2.25,6.3/3.65,2.7/4.1,7.2/5.85,1.3/6.7,4.4/7.3,2.2/7.7}
      \draw[fill=white] (\x,\y) circle (1.5pt);
    \node[left] at (-0.1,0.5) {$B_1$};
    \node[left] at (-0.1,7.5) {$B_b$};
    \draw[decorate,decoration={brace,amplitude=4pt}] (-0.3,1.05) -- (-0.3,6.95) node[midway,left=5pt] {$\Rin$};
    \node[below] at (0.5,-0.1) {$C_1$};
    \node[below] at (4,-0.1) {$\cdots$};
    \node[below] at (7.5,-0.1) {$C_b$};
  \end{scope}
  \begin{scope}[xshift=5.2cm,x=2.2cm,y=0.46cm]
    \draw[black!45] (0,0) -- (0,8);
    \node[below] at (0,-0.1) {$0$};
    \draw[densely dotted] (-0.125,1) -- (-0.125,7);
    \draw[<->] (-0.125,6.72) -- (0.8125,6.72) node[midway,below=-1pt] {$D$};
    \draw[very thick,line join=round] (0,0) -- (0,0.35) -- (0.1875,0.35) -- (0.1875,0.72) -- (-0.125,0.72) -- (-0.125,1.35) -- (0.3125,1.35) -- (0.3125,1.8) -- (0.5,1.8) -- (0.5,2.25) -- (0.4375,2.25) -- (0.4375,2.7) -- (0,2.7) -- (0,3.2) -- (0.3125,3.2) -- (0.3125,3.65) -- (0.625,3.65) -- (0.625,4.1) -- (0.4375,4.1) -- (0.4375,4.5) -- (0.375,4.5) -- (0.375,4.95) -- (0.8125,4.95) -- (0.8125,5.4) -- (0.625,5.4) -- (0.625,5.85) -- (1.0625,5.85) -- (1.0625,6.3) -- (1.125,6.3) -- (1.125,6.7) -- (0.8125,6.7) -- (0.8125,7.3) -- (0.875,7.3) -- (0.875,7.7) -- (0.6875,7.7) -- (0.6875,8);
    \node[below] at (0.5,-0.1) {$\bar a-a$};
  \end{scope}
  \begin{scope}[xshift=9.5cm,x=0.40cm,y=3.68cm]
    \foreach \r in {1,...,9} \draw[black!25] (\r,0) -- (\r,1);
    \draw[->] (0,0) -- (10.6,0) node[right] {$u$};
    \draw[->] (0,0) -- (0,1.13);
    \foreach \v/\l in {0/0,0.5/\frac12,1/1}
      \draw (-0.15,\v) -- (0,\v) node[left] {$\l$};
    \draw[dashed] (1,0.5) -- (9,0.5);
    \draw[very thick] (0,1) -- (1,1);
    \foreach \r in {1,...,8} \draw[very thick] (\r,{1-(\r-0.5)/8}) -- (\r+1,{1-(\r-0.5)/8});
    \draw[very thick] (9,0) -- (10,0);
    \node[right] at (9.05,0.5) {old weight};
    \node[below] at (1,0) {$I^-$};
    \node[below] at (9,0) {$I^+$};
    \node[above] at (1.5,1) {$C_1$};
    \node[above] at (8.5,1) {$C_b$};
  \end{scope}
\end{tikzpicture}
\caption{The gain, drawn for $b=8$. Left: the points of $R$; filled points have color $+1$ and hollow points have color $-1$. The inner region $\Rin$ is shaded, and the columns are the vertical strips $C_k\times J$. Middle: the difference $\bar a-a$ on $J$, with its values plotted horizontally from the thin vertical line at $0$; here $\bar a$ is the average of the child functions. By \eqref{eq:walk}, this difference jumps by $X_i$ at the height of each point of $R$. The jumps of the points of $\Rin$ add up to $D$, which shifts the top block against the bottom block. Right: the weight of a point in $\bar a$ as a function of its horizontal coordinate $u$. In the child $C_r$ it is $\frac12+\frac{b+1-2r}{2b}$, and the dashed line is its old weight $1/2$, that is, its weight in $a$.}
\label{fig:gain}
\end{figure}

\emph{Why the lemma holds.} We will see that the average $\bar a$ of the $b$ child functions is the old function plus an independent jump at the height of each point in $I$ (\cref{fig:gain}, middle). Each jump is symmetric. The points of $\Rin$ lie above the bottom block $B_1$ and below the top block $B_b$. So the sum $D$ of their jumps shifts $\bar a$ on $B_b$ against $\bar a$ on $B_1$, and $D$ has standard deviation of order $\sqrt{m_R}$. By \cref{lem:comparison}, the oscillation on $J$ registers this offset. The oscillation on a single block, which is all that a horizontal strip measures, does not. For a particular outcome, $D$ may cancel an offset that is already there. On average it cannot, because $D$ is symmetric; this is what \cref{fact:sym} says.

\begin{proof}[Proof of \cref{lem:local}]
\emph{Step 1: averaging the children.} Average the child functions:
\[
\bar a(y)=\frac1b\sum_{k=1}^ba_k(y).
\]
By the triangle inequality for the oscillation, $\sum_k\osc_Ja_k\ge\osc_J\bigl(\sum_ka_k\bigr)=b\,\osc_J\bar a$. So \cref{lem:comparison}, applied to $f=\bar a$, gives
\begin{equation}\label{eq:avgosc}
\sum_k\osc_Ja_k\ \ge\ \sum_\ell\osc_{B_\ell}\bar a+2\,\bigl|\midp_{B_b}\bar a-\midp_{B_1}\bar a\bigr|.
\end{equation}

\emph{Step 2: the weights in $\bar a$.} By \eqref{eq:a}, the function $\bar a$ counts each point with the average of its weights in the $b$ children, whereas $a$ counts it with its weight in $I$, which we call its \emph{old weight}. A point outside $I$ lies to the left of every child or to the right of every child, so it keeps its old weight, which is $1$ or~$0$. A point in $I$ has old weight $1/2$. If it is in the child $C_r$, it lies to the left of the $b-r$ children $C_{r+1},\dots,C_b$, where it has weight $1$, and inside $C_r$, where it has weight $1/2$. So its weight in $\bar a$ is
\[
\frac{(b-r)+1/2}b=\frac12+\frac{b+1-2r}{2b},
\]
that is, its old weight $1/2$ plus a deviation; see \cref{fig:gain} (right). For a point $P_i$ in the child $C_r$, let $X_i=\chi_i\,(b+1-2r)/(2b)$ be this deviation, multiplied by the color. The old function $a$ counts every point with its old weight, by \eqref{eq:a}. So, for every height $y$,
\begin{equation}\label{eq:walk}
\bar a(y)=a(y)+\sum_{i:\,U_i\in I,\ V_i\le y}X_i.
\end{equation}
In words, $\bar a$ follows the old function and makes an extra jump $X_i$ at the height of each point in $I$.

In the local experiment, the index $r$ of a point in $I$ is uniform on $\{1,\dots,b\}$, independently for different points. So $b+1-2r$ is uniform on the symmetric set $\{1-b,\,3-b,\,\dots,\,b-1\}$, and the $X_i$ are independent and symmetric, whatever the colors are. In particular $\E X_i=0$, and \eqref{eq:walk} gives
\begin{equation}\label{eq:avg}
\E\,\bar a(y)=a(y)\qquad\text{for every height }y.
\end{equation}

\emph{Step 3: taking expectations.} The expectation of a maximum is at least the maximum of the expectations, and the expectation of a minimum is at most the minimum of the expectations. So \eqref{eq:avg} gives, for every $\ell$,
\[
\E\,\osc_{B_\ell}\bar a=\E\max_{B_\ell}\bar a-\E\min_{B_\ell}\bar a\ \ge\ \max_{B_\ell}a-\min_{B_\ell}a=\osc_{B_\ell}a.
\]
Taking expectations in \eqref{eq:avgosc}, we get
\begin{equation}\label{eq:gain}
\E\sum_k\osc_Ja_k\ \ge\ \sum_\ell\osc_{B_\ell}a+2\,\E\bigl|\midp_{B_b}\bar a-\midp_{B_1}\bar a\bigr|.
\end{equation}
The first term on the right is what the horizontal strips had. The second term is the gain.

\emph{Step 4: the gain.} Let $D=\sum_{i:\,P_i\in\Rin}X_i$ be the sum of the jumps of the $m_R$ points of $\Rin$, and let $g$ be the function $\bar a$ without these jumps:
\[
g(y)=a(y)+\sum_{i:\,U_i\in I,\ P_i\notin\Rin,\ V_i\le y}X_i.
\]
The heights of the points of $\Rin$ lie above $B_1$ and below $B_b$. So by \eqref{eq:walk}, their jumps do not enter $\bar a$ on $B_1$, and they add the constant $D$ to $\bar a$ on $B_b$; see \cref{fig:gain} (left and middle). That is, $\bar a=g$ on $B_1$ and $\bar a=g+D$ on $B_b$. Adding the constant $D$ to a function adds $D$ to its midpoint, so
\[
\midp_{B_b}\bar a-\midp_{B_1}\bar a=\theta+D,\qquad\text{where }\theta=\midp_{B_b}g-\midp_{B_1}g.
\]

Now condition on the children of all other points in $I$, that is, of the points in $I$ whose heights lie in $B_1$, in $B_b$, or outside $J$. This fixes the function $g$, and hence the number $\theta$. The children of the $m_R$ points of $\Rin$ are independent of the conditioning, so the jumps in $D$ are still independent and symmetric. By definition, $X_i^2=(r-\frac{b+1}2)^2/b^2$ with $r$ uniform on $\{1,\dots,b\}$. Here $r$ has mean $(b+1)/2$ and second moment $(b+1)(2b+1)/6$, so its variance is $(b+1)(2b+1)/6-(b+1)^2/4=(b^2-1)/12$. Also $|X_i|\le(b-1)/(2b)$. Hence
\[
\E X_i^2=\frac{b^2-1}{12b^2}=:\sigma^2,\qquad\E X_i^4\le\Bigl(\frac{b-1}{2b}\Bigr)^2\sigma^2\le\frac{b^2-1}{4b^2}\,\sigma^2=3\sigma^4.
\]
So \cref{fact:sym} applies to $\theta+D$, with $m=m_R$. Its bound holds for every $\theta$, so the children of the other points cannot reduce it. Conditionally, and hence also without the conditioning,
\[
\E\bigl|\midp_{B_b}\bar a-\midp_{B_1}\bar a\bigr|\ \ge\ \sigma\sqrt{m_R/3}=\frac{\sqrt{b^2-1}}{6b}\,\sqrt{m_R}\ \ge\ \frac{\sqrt{m_R}}8,
\]
because $16(b^2-1)\ge9b^2$ for $b\ge2$. Together with \eqref{eq:gain}, this proves the inequality of the lemma.
\end{proof}

\paragraph{What one rectangle gives, and what is missing.}
Take $I=[0,1]$ and $J=[0,1)$, so that $R$ is the unit square without its top edge. Every oscillation of a child function is at most $2\Fnorm F$, because it is a difference of two values of $a_k$, and each of them is an average of two values of $F$. So the left-hand side in \cref{lem:local} is at most $2b\,\E\Fnorm F$. When the heights are fixed and the horizontal coordinates are independent and uniform, the first digits perform the local experiment, and the lemma gives $\E\Fnorm F\ge\sqrt{m_R}/(8b)$. For typical heights, $\Rin$ contains about $(1-2/b)n$ points, so this is of order $\sqrt n$ for fixed $b$. But it is not a discrepancy lower bound. The coloring was fixed before the horizontal coordinates were sampled, and a coloring chosen after seeing the points achieves $O(\log^{3/2}n)$~\cite{Nik17}. We need a bound for all $2^n$ colorings simultaneously, so each fixed coloring may fail only with probability well below $2^{-n}$. One comparison on one rectangle is far from that. \Cref{sec:scales} therefore uses $M$ small rectangles in place of one large rectangle, and repeats the comparison at $h$ successive horizontal scales. The lemma is designed for this: its output, a sum of oscillations over vertical strips, has the same form as its input, a sum of oscillations over horizontal strips. Two things are then still missing. The gain depends on how many points each rectangle receives, which is the subject of \cref{lem:crowd}, and the lemma holds only in expectation, which is the subject of \cref{lem:fluct}.

\subsection{All scales}\label{sec:scales}

\paragraph{Partitions and potential.}
Fix an integer $h\ge1$ and set $M=b^{h-1}$. At stage $j$, $0\le j\le h$, cut $[0,1]$ horizontally into $b^j$ intervals of length $b^{-j}$, and vertically into $b^{h-j}$ intervals of length $b^{j-h}$. Their products are the cells of stage $j$. There are $b^j\cdot b^{h-j}=bM$ cells at every stage. The interval $I$ is no longer fixed, and $a^{(I)}$ denotes the endpoint average of $I$. As in \eqref{eq:Zdef}, the potential is
\[
Z_j=\frac1{bM}\sum_{\text{stage-$j$ cells }I\times B}\osc_Ba^{(I)}.
\]
By \eqref{eq:osc}, an oscillation $\osc_Ba^{(I)}$ is a difference of two values of $a^{(I)}$, and each of them is an average of two values of $F$. So every oscillation is at most $2\Fnorm F$, and
\begin{equation}\label{eq:Zbound}
0\le Z_j\le2\Fnorm F.
\end{equation}
Let $\F_j$ be the $\sigma$-field generated by all heights and by the first $j$ base-$b$ digits of every horizontal coordinate. The first $j$ digits of $U_i$ determine the stage-$j$ horizontal interval that contains $U_i$. So $Z_j$ is $\F_j$-measurable. The $(j+1)$-st digits, which we call the \emph{next digits} at stage $j$, are independent and uniform conditional on $\F_j$. The passage from stage $j$ to stage $j+1$ is the \emph{transition} $j\to j+1$.

\paragraph{The plan.}
Since $Z_j$ is $\F_j$-measurable, the increment $Z_{j+1}-Z_j$ is its conditional mean given $\F_j$ plus $Z_{j+1}-\E[Z_{j+1}\mid\F_j]$. Summing over $j$ gives the identity
\begin{equation}\label{eq:roadmap}
Z_h\ =\ Z_0\ +\ \sum_{j<h}\E[Z_{j+1}-Z_j\mid\F_j]\ +\ T,\qquad\text{where}\quad T=\sum_{j<h}\bigl(Z_{j+1}-\E[Z_{j+1}\mid\F_j]\bigr).
\end{equation}
We call $\E[Z_{j+1}-Z_j\mid\F_j]$ the \emph{conditional gain} of the transition $j\to j+1$, and $T$ the \emph{fluctuation term}. By \eqref{eq:Zbound}, $Z_0\ge0$ and $Z_h\le2\Fnorm F$. The rest of the proof has three steps.
\begin{enumerate}[label=(\alph*),nosep]
\item Each conditional gain is at least a number $\delta_j$ that depends only on the points (\cref{lem:drift}).
\item When $M\le n$ and $b$ is large, $\delta_j$ is large for every $j$, except with exponentially small probability (\cref{lem:crowd}).
\item For a fixed coloring, the fluctuation term $T$ is unlikely to be very negative (\cref{lem:fluct}).
\end{enumerate}
\Cref{prop:sim} puts the three steps together and takes a union bound over the colorings.

\paragraph{Transitions.}
In the transition $j\to j+1$, we call the objects of stage $j$ \emph{old} and those of stage $j+1$ \emph{new}. A \emph{comparison rectangle} $R=I\times J$ is the product of one old horizontal interval and one new vertical interval. It is the union of $b$ old cells, which are its horizontal strips, and also of $b$ new cells, which are its vertical strips, as in \cref{fig:stages}. There are $b^j$ old horizontal intervals and $b^{h-j-1}$ new vertical intervals, hence $M$ comparison rectangles, each of area $b^{-j}\cdot b^{j+1-h}=1/M$. As in \cref{sec:blocks}, the inner region of $R$ is $R$ without its bottom and top old cells, and $m_R$ is the number of points in it.

Summing \cref{lem:local} over the comparison rectangles bounds the conditional gain of the transition.

\begin{lemma}[Conditional gain]\label{lem:drift}
For every $0\le j<h$, almost surely
\begin{equation}\label{eq:drift}
\E[Z_{j+1}-Z_j\mid\F_j]\ \ge\ \delta_j:=\frac1{4bM}\sum_R\sqrt{m_R},
\end{equation}
where $R$ ranges over the $M$ comparison rectangles of the transition $j\to j+1$.
\end{lemma}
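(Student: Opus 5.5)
The plan is to decompose both $Z_j$ and $Z_{j+1}$ according to the comparison rectangles of the transition $j\to j+1$, and then to apply \cref{lem:local} to each rectangle separately.

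\emph{Decomposition.} Every old cell $I\times B$ of stage $j$ lies in exactly one comparison rectangle. Its interval $I$ is an old horizontal interval, and its interval $B$ lies in exactly one new vertical interval $J$, of which it is one of the $b$ blocks. Likewise, every new cell $C\times J$ lies in exactly one comparison rectangle: $C$ is a child of exactly one old interval $I$. Therefore
\[
bM\,Z_j=\sum_R\sum_{\ell=1}^b\osc_{B_\ell}a^{(I)},\qquad bM\,Z_{j+1}=\sum_R\sum_{k=1}^b\osc_Ja^{(C_k)},
\]
where $R=I\times J$ ranges over the $M$ comparison rectangles, $B_1,\dots,B_b$ are the blocks of $J$, and $C_1,\dots,C_b$ are the children of $I$. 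By linearity of conditional expectation, it then suffices to show, for each comparison rectangle $R$, that
\[
\E\Bigl[\sum_k\osc_Ja^{(C_k)}\Bigm|\F_j\Bigr]\ \ge\ \sum_\ell\osc_{B_\ell}a^{(I)}+\tfrac14\sqrt{m_R}.
\]

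\emph{Identifying the local experiment.} I would check that conditioning on $\F_j$ realizes, for each $R$, the local experiment of \cref{sec:onerect}. The $\sigma$-field $\F_j$ fixes all heights, and the coloring $\chi$ is fixed throughout. The first $j$ digits of every horizontal coordinate fix which stage-$j$ interval each point lies in, and hence its position relative to $I$: left of $I$, in $I$, or right of $I$. Conditional on $\F_j$, the next digits are independent and uniform. For a point in $I$, the next digit is exactly the index of the child $C_r$ that contains it. So the children of the points in $I$ are independent and uniform on $\{1,\dots,b\}$.

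As noted before \cref{lem:local}, the functions $a^{(C_k)}$ depend on the horizontal coordinates only through these data. The function $a^{(I)}$ and the count $m_R$ are both $\F_j$-measurable. Thus the conditional law of $(a^{(C_1)},\dots,a^{(C_b)})$ given $\F_j$ is the law in the local experiment, and \cref{lem:local} yields the displayed inequality almost surely. Extra randomness in the other next digits, which belong to points outside $I$, does not affect these functions.

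\emph{Conclusion and main obstacle.} Summing the inequality over $R$ and dividing by $bM$ gives \eqref{eq:drift}. The only delicate point is the bookkeeping. Each cell must be counted exactly once on each side: the old cells partition the unit square, the new cells do too, and each comparison rectangle contains exactly $b$ of each kind. The vertical intervals must also match, which holds because the new height $b^{j+1-h}$ equals $b$ times the old height $b^{j-h}$.

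A secondary point is the measure-zero boundary issue. It is handled by the standing assumption that no coordinate has the form $k/b^r$, so every point lies strictly inside its intervals and the weights are well defined.
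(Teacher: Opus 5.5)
Your proposal is correct and follows the paper's proof essentially step for step: you decompose $bM(Z_{j+1}-Z_j)$ over the $M$ comparison rectangles, note that conditioning on $\F_j$ realizes the local experiment of \cref{sec:onerect} in every rectangle at once, and apply \cref{lem:local} to each term before dividing by $bM$. Your extra remarks, on the nesting of the old and new vertical intervals and on the probability-zero boundary event, are accurate details that the paper leaves implicit.
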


\begin{proof}
Every old cell is a horizontal strip of exactly one comparison rectangle $R=I\times J$, and every new cell is a vertical strip of exactly one. So, with the children $C_1,\dots,C_b$ of $I$ and the blocks $B_1,\dots,B_b$ of $J$ for each $R$,
\[
bM\,(Z_{j+1}-Z_j)=\sum_R\Bigl(\sum_k\osc_Ja^{(C_k)}-\sum_\ell\osc_{B_\ell}a^{(I)}\Bigr).
\]
Conditionally on $\F_j$, the next digits perform the local experiment of \cref{sec:onerect} in every $R$ at once: all heights and the position of every point relative to $I$ are fixed, and the next digit of each point in $I$ chooses its child uniformly and independently. Here $a^{(I)}$ is the old function and $a^{(C_1)},\dots,a^{(C_b)}$ are the child functions. So \cref{lem:local} bounds the conditional expectation of each term, and dividing by $bM$ gives \eqref{eq:drift}.
\end{proof}

\paragraph{Good transitions.}
The number $\delta_j$ is small if the points crowd into few inner regions, because a region with $m$ points contributes only $\sqrt m$. If every inner region held about $n/M$ points, the average number of points of a comparison rectangle, then $\delta_j$ would be about $\frac1{4b}\sqrt{n/M}$. We show that, when $M\le n$, typically a fraction of this that depends only on $b$ is retained at every transition.

Call an inner region \emph{heavy} if $m_R>bn/M$, that is, if it holds more than $b$ times the average number of points of a comparison rectangle. Call a transition \emph{good} if at least $n/2$ points lie in inner regions that are not heavy. Let $G_j$ be the event that the transition $j\to j+1$ is good, and let $G=\bigcap_{j<h}G_j$ be the event that all $h$ transitions are good. These events depend only on the points. We write $G_j^c$ and $G^c$ for their complements. Put
\begin{equation}\label{eq:d}
\delta_*=\frac1{8b^{3/2}}\sqrt{\frac nM}.
\end{equation}

\begin{lemma}[Good transitions]\label{lem:crowd}
Let $0\le j<h$.
\begin{enumerate}[label=(\roman*),nosep]
\item On $G_j$, we have $\delta_j\ge\delta_*$.
\item If $M\le n$, then $\Pr(G_j^c)\le(48/b)^{n/2}$, and hence $\Pr(G^c)\le h(48/b)^{n/2}$.
\end{enumerate}
\end{lemma}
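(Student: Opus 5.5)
For part (i) the plan is a concavity argument restricted to the non-heavy inner regions. For part (ii) the plan is a union bound over the possible configurations of heavy regions.

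\emph{Part (i).} On $G_j$, let $\mathcal L$ be the set of comparison rectangles whose inner region is not heavy. By the definition of a good transition, $\sum_{R\in\mathcal L}m_R\ge n/2$. For $R\in\mathcal L$ we have $m_R\le bn/M$. Hence $\sqrt{m_R}=m_R/\sqrt{m_R}\ge m_R\big/\sqrt{bn/M}$, and this also holds trivially when $m_R=0$. Dropping the heavy rectangles from the sum in \eqref{eq:drift} gives
\[
\delta_j\ \ge\ \frac1{4bM}\sum_{R\in\mathcal L}\sqrt{m_R}\ \ge\ \frac1{4bM}\cdot\frac{n/2}{\sqrt{bn/M}}=\frac1{8b^{3/2}}\sqrt{\frac nM}=\delta_*.
\]
This step is routine.

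\emph{Part (ii), reduction.} On $G_j^c$, more than $n/2$ points lie in heavy inner regions. Each heavy region holds more than $bn/M$ points, and there are only $n$ points in total, so fewer than $M/b$ inner regions are heavy. Each comparison rectangle has area $1/M$, so the union $S$ of the heavy inner regions has area less than $(M/b)\cdot(1/M)=1/b$. Moreover, $S$ is a union of inner regions of the fixed deterministic grid of comparison rectangles of the transition $j\to j+1$. Consequently $G_j^c$ is contained in the following event: there exist a family $\mathcal S$ of comparison rectangles with $|\mathcal S|<M/b$, and a set $Q$ of $\lceil n/2\rceil$ indices, such that every $P_i$ with $i\in Q$ lies in $\bigcup_{R\in\mathcal S}\Rin$.

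\emph{Part (ii), union bound.} Fix such a pair $(\mathcal S,Q)$. The points are independent and uniform, and the union has area at most $1/b$, so the probability of the event is at most $b^{-\lceil n/2\rceil}\le b^{-n/2}$. There are at most $2^M\le 2^n$ families $\mathcal S$, and here the hypothesis $M\le n$ is used. There are at most $2^n$ index sets $Q$. Hence
\[
\Pr(G_j^c)\ \le\ 4^n\,b^{-n/2}=(16/b)^{n/2}\ \le\ (48/b)^{n/2}.
\]
A union bound over the $h$ transitions gives $\Pr(G^c)\le h(48/b)^{n/2}$.

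The only point that needs care is in part (ii). Once the points are fixed, the heavy set is random, so the union bound must run over all candidate sets $S$ drawn from the finite grid family, and over all candidate subsets of points. The condition $M\le n$ is exactly what keeps the number of candidate families below $2^n$. Beyond that, the estimates are elementary; the slack between $16/b$ and $48/b$ can absorb cruder area bounds, for example counting the whole rectangle instead of its inner region, or a rounding of $\lceil M/b\rceil$.
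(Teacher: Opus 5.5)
Part (i) is correct and is the paper's argument. Part (ii) has a genuine gap in its first sentence. On $G_j^c$ it is \emph{not} true that more than $n/2$ points lie in heavy inner regions, because the inner regions do not cover the square. Each comparison rectangle $R=I\times J$ also has its bottom and top old cells $I\times B_1$ and $I\times B_b$. A point there lies in no inner region, so it counts neither as a point in a non-heavy inner region nor as one in a heavy inner region. The failure of $G_j$ only says that more than $n/2$ points lie in the union of the heavy inner regions \emph{and} these boundary strips. For example, if every point falls into a boundary strip, then $G_j$ fails while no inner region is heavy. So $G_j^c$ is not contained in the event you bound, and your estimate $(16/b)^{n/2}$ bounds the wrong event. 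This case is not negligible: the boundary strips alone have total area $2/b$.

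The repair is to put the boundary strips into the fixed set. For each family $\mathcal S$ of fewer than $M/b$ inner regions, let $W_{\mathcal S}$ be the union of all $2M$ boundary strips and the inner regions in $\mathcal S$. The boundary strips depend only on the grid of the transition, so they are deterministic, and they cover area $2M\cdot\frac1{bM}=2/b$. Hence $W_{\mathcal S}$ is a fixed set of area less than $3/b$. Your counting then gives $2^M\cdot2^n\cdot(3/b)^{n/2}\le(48/b)^{n/2}$, which is exactly where the constant $48=16\cdot3$ in the statement comes from.

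Your bound of fewer than $M/b$ heavy regions survives, since it only uses disjointness of the inner regions and $m_R>bn/M$ for each heavy one. Your closing remark does not close the gap, however. Replacing the heavy inner regions by the whole heavy rectangles still misses the boundary strips of the non-heavy rectangles, and those are where the uncounted points can lie.
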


\begin{proof}
(i) An inner region that is not heavy has $m_R\le bn/M$, and hence $\sqrt{m_R}\ge m_R/\sqrt{bn/M}$. On $G_j$ these regions hold at least $n/2$ points in total, so $\sum_R\sqrt{m_R}\ge\frac n2\sqrt{M/(bn)}$. By the definition of $\delta_j$ in \eqref{eq:drift},
\[
\delta_j\ \ge\ \frac1{4bM}\cdot\frac n2\sqrt{\frac M{bn}}=\frac1{8b^{3/2}}\sqrt{\frac nM}=\delta_*.
\]

(ii) Suppose that $G_j$ fails, that is, fewer than $n/2$ points lie in inner regions that are not heavy. Every other point lies in a heavy inner region or in a \emph{boundary strip}, by which we mean the bottom or the top old cell of a comparison rectangle. So more than $n/2$ points lie in the union $W$ of the boundary strips and the heavy regions. There are fewer than $M/b$ heavy regions, since no point lies in two inner regions and each heavy one contains more than $bn/M$ of the $n$ points.

Which regions are heavy depends on the positions of the points, so $W$ is not a fixed set. We therefore consider every collection $\mathcal S$ of fewer than $M/b$ inner regions, whether or not its regions are heavy. Let $W_{\mathcal S}$ be the union of the boundary strips and of the regions in $\mathcal S$, and let $E_{\mathcal S}$ be the event that at least $n/2$ points lie in $W_{\mathcal S}$. We have just seen that if $G_j$ fails, then $E_{\mathcal S}$ occurs for the collection $\mathcal S$ of heavy regions. So $G_j^c$ is contained in the union of the events $E_{\mathcal S}$.

Fix $\mathcal S$. The boundary strips are two of the $b$ old cells of every comparison rectangle, so they cover an area of $2/b$. The regions in $\mathcal S$ have area less than $1/M$ each, so they cover an area of less than $1/b$. Hence $W_{\mathcal S}$ is a fixed set of area at most $3/b$. If $E_{\mathcal S}$ occurs, then there is a set of $\lceil n/2\rceil$ indices whose points all lie in $W_{\mathcal S}$. There are at most $2^n$ sets of indices. The $n$ points are independent and uniform in the unit square, so the points of a given set all lie in the fixed set $W_{\mathcal S}$ with probability at most $(3/b)^{\lceil n/2\rceil}\le(3/b)^{n/2}$, since $3/b\le1$. So $\Pr(E_{\mathcal S})\le2^n(3/b)^{n/2}$. There are at most $2^M$ collections of inner regions, so $\Pr(G_j^c)\le2^{M+n}(3/b)^{n/2}$. Since $M\le n$, we have $2^{M+n}\le4^n=16^{n/2}$, which gives $(48/b)^{n/2}$. A union bound over the $h$ transitions gives the bound on $\Pr(G^c)$.
\end{proof}

\paragraph{Fluctuations.}
It remains to bound the fluctuation term for a fixed coloring.

\begin{lemma}[Fluctuations]\label{lem:fluct}
Fix a coloring, and let $T=\sum_{j<h}\bigl(Z_{j+1}-\E[Z_{j+1}\mid\F_j]\bigr)$ be the fluctuation term of \eqref{eq:roadmap}. Then, for every $\lambda>0$,
\begin{equation}\label{eq:fluct}
\Pr(T\le-\lambda)\le\exp\Bigl(-\frac{\lambda^2M^2}{2hn}\Bigr).
\end{equation}
\end{lemma}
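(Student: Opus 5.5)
The plan is a martingale concentration argument: Azuma–Hoeffding, with each increment controlled by McDiarmid's bounded-differences inequality applied conditionally on $\F_j$. Write $Y_j=Z_{j+1}-\E[Z_{j+1}\mid\F_j]$, so that $T=\sum_{j<h}Y_j$. Each $Y_j$ is $\F_{j+1}$-measurable and satisfies $\E[Y_j\mid\F_j]=0$, so $T$ is a sum of martingale differences. By Markov's inequality, for $s>0$ we have $\Pr(T\le-\lambda)\le e^{-s\lambda}\,\E e^{-sT}$. It therefore suffices to show the conditional moment bound
\[
\E\bigl[e^{-sY_j}\mid\F_j\bigr]\le\exp\Bigl(\frac{s^2n}{2M^2}\Bigr)\qquad\text{for every }s>0.
\]
Peeling off one factor at a time by the tower property then gives $\E e^{-sT}\le\exp\bigl(s^2hn/(2M^2)\bigr)$. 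Choosing $s=\lambda M^2/(hn)$ yields \eqref{eq:fluct}.

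For the conditional bound, fix $\F_j$. Then $Z_{j+1}$ is a deterministic function of the $n$ next digits, and these digits are independent and uniform. By the Hoeffding/McDiarmid lemma, it is enough to show that changing the next digit of a single point $P_i$ changes $Z_{j+1}$ by at most $1/M$. Then $Y_j$ has the conditional sub-Gaussian bound with variance proxy $\sum_i(1/M)^2/4\le n/M^2$, which is even better than the bound we need.

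The bounded-difference step is the one real computation. Let $P_i$ lie in the old interval $I$ with children $C_1,\dots,C_b$. The next digit of $P_i$ only decides in which child it lies. Its position relative to any new interval that is not a child of $I$ is unchanged. So among the functions $a^{(C)}$ over new intervals $C$, only $a^{(C_1)},\dots,a^{(C_b)}$ can change.

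Suppose $P_i$ moves from $C_r$ to $C_{r'}$ with $r<r'$. Its weight in $C_r$ goes from $1/2$ to $1$, its weight in $C_{r'}$ goes from $0$ to $1/2$, and its weight in each $C_k$ with $r<k<r'$ goes from $0$ to $1$. The weights in all other children are unchanged. Hence $\sum_k|\Delta w_k|\le r'-r\le b-1$.

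Each $a^{(C_k)}$ changes by $\chi_i\,\Delta w_k\,\1\{V_i\le y\}$. On a new vertical interval $B$ not containing $V_i$, this is a constant shift, and shift invariance kills it. On the unique new vertical interval $B\ni V_i$, the reverse triangle inequality bounds the change of $\osc_Ba^{(C_k)}$ by $|\Delta w_k|$. Summing over $k$, the sum $\sum\osc_Ba^{(C)}$ over new cells changes by at most $b$. Dividing by $bM$ gives a change of at most $1/M$ in $Z_{j+1}$.

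I expect the main obstacle to be getting this per-point Lipschitz constant uniformly in $b$. A naive count would give $b$ affected cells, each changing by up to $1$, hence a constant of order $b/M$. The telescoping structure of the weights across children is what brings the total down to $b-1$, and hence the constant to $1/M$. The rest is the standard Azuma argument. The only care needed there is to apply Hoeffding's lemma conditionally on $\F_j$, using that the next digits are independent of $\F_j$.
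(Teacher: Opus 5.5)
Your proof is correct and follows the paper's route. You prove a per-digit Lipschitz bound of $1/M$ for $Z_{j+1}$, apply the bounded-differences moment bound conditionally on $\F_j$ (your Hoeffding constant is sharper than the paper's $\cosh$ bound), and finish with the tower property, Markov's inequality and $s=\lambda M^2/(hn)$. One correction to your closing remarks: the telescoping is not needed, because $b$ affected cells each changing by at most $1$ already give $b/(bM)=1/M$, which is exactly the paper's count (also, under the paper's convention a point to the left of a child has weight $1$, so your weight changes have the wrong sign, though only their absolute values matter).
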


\begin{proof}
Write $T_j=Z_{j+1}-\E[Z_{j+1}\mid\F_j]$, so that $T=\sum_{j<h}T_j$. The proof has three steps. In the first two, the transition $j\to j+1$ is fixed.

\emph{Step 1: one digit.} Changing the next digit of one point $P_i$ changes $Z_{j+1}$ by at most $1/M$. Indeed, the change moves $P_i$ to another child of its stage-$j$ interval $I$. So the weight of $P_i$ changes only in the $b$ children of $I$, in each of them by a number in $[-1,1]$. Hence the endpoint average of each child changes by a multiple of the function $y\mapsto\1\{V_i\le y\}$, with a factor in $[-1,1]$, and the endpoint average of no other interval of stage $j+1$ changes. Consider a cell $C\times B$ of stage $j+1$, where $C$ is a child of $I$. If $V_i\notin B$, the added function is constant on $B$, and $\osc_Ba^{(C)}$ does not change, by shift invariance. If $V_i\in B$, the added function has oscillation at most $1$ on $B$, and $\osc_Ba^{(C)}$ changes by at most $1$, by the reverse triangle inequality. The second case occurs for $b$ cells, one for each child. Since $Z_{j+1}$ is the sum of the oscillations over the $bM$ cells, divided by $bM$, it changes by at most $b/(bM)=1/M$.

\emph{Step 2: one transition.} The potential $Z_{j+1}$ is a bounded function of the $n$ next digits and of $\F_j$-measurable data, namely the heights and the first $j$ digits. The next digits are independent of each other and of $\F_j$. So Step~1 and the bounded-differences inequality (\cref{fact:bd} in the appendix, with $\Gamma$ the $\F_j$-measurable data, $\xi_1,\dots,\xi_n$ the next digits, $c=1/M$, and $-t$ in place of $t$) give, for every $t>0$,
\[
\E\bigl[e^{-tT_j}\bigm|\F_j\bigr]\le\exp\Bigl(\frac{nt^2}{2M^2}\Bigr).
\]

\emph{Step 3: all transitions.} For every $j$, the earlier terms $T_0,\dots,T_{j-1}$ are $\F_j$-measurable. So we can condition successively on $\F_{h-1},\F_{h-2},\dots,\F_0$ and apply Step~2 each time. Multiplying the $h$ bounds gives $\E e^{-tT}\le\exp(hnt^2/(2M^2))$. By Markov's inequality,
\[
\Pr(T\le-\lambda)=\Pr\bigl(e^{-tT}\ge e^{t\lambda}\bigr)\le\exp\Bigl(-t\lambda+\frac{hnt^2}{2M^2}\Bigr),
\]
and the choice $t=\lambda M^2/(hn)$ gives the claim.
\end{proof}

\begin{proposition}[Simultaneous probability bound]\label{prop:sim}
If $M\le n$, then
\begin{equation}\label{eq:sim}
\Pr\biggl[\exists\chi\in\pmone^n:\ \Fnorm{F_\chi}\le\frac h{64b^{3/2}}\sqrt{\frac nM}\biggr]\ \le\ h(48/b)^{n/2}+2^ne^{-hM/(2048b^3)}.
\end{equation}
\end{proposition}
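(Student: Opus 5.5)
The plan is to combine the identity \eqref{eq:roadmap} with \cref{lem:drift,lem:crowd,lem:fluct}, and then take a union bound over the colorings. Set $\lambda=\frac h2\delta_*$, where $\delta_*$ is as in \eqref{eq:d}. Fix a coloring $\chi$. Suppose the event $G$ of \cref{lem:crowd} holds and the fluctuation term satisfies $T>-\lambda$. By \eqref{eq:roadmap}, together with $Z_0\ge0$, \cref{lem:drift}, and \cref{lem:crowd}(i),
\[
2\Fnorm{F_\chi}\ \ge\ Z_h\ \ge\ \sum_{j<h}\delta_j+T\ >\ h\delta_*-\frac h2\delta_*=\frac h2\delta_*.
\]
Hence $\Fnorm{F_\chi}>\frac h4\delta_*=\frac h{32b^{3/2}}\sqrt{n/M}$. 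This is stronger than the threshold in \eqref{eq:sim}, so the event in \eqref{eq:sim} requires either $G^c$ or, for some $\chi$, $T_\chi\le-\lambda$. (Using a smaller $\lambda$, say $\lambda=\frac34h\delta_*$, would give exactly $\frac h{64b^{3/2}}\sqrt{n/M}$. I will choose $\lambda$ so that the exponent comes out as stated; see below.)

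The first event has probability at most $h(48/b)^{n/2}$ by \cref{lem:crowd}(ii), since $M\le n$. This step is the same for all colorings, because $G$ depends only on the points. For the second event, the union bound over the $2^n$ colorings and \cref{lem:fluct} give
\[
\Pr\bigl(\exists\chi:\ T_\chi\le-\lambda\bigr)\ \le\ 2^n\exp\Bigl(-\frac{\lambda^2M^2}{2hn}\Bigr).
\]
With $\lambda=c\,h\delta_*$, we have $\lambda^2=c^2h^2n/(64b^3M)$, so the exponent is $c^2hM/(128b^3)$. To reach $hM/(2048b^3)$ we need $c^2=1/16$, that is, $c=1/4$. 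Then the event $G$ together with $T>-\lambda$ gives $2\Fnorm{F_\chi}>\frac34h\delta_*$, hence $\Fnorm{F_\chi}>\frac38h\delta_*\ge\frac18h\delta_*=\frac h{64b^{3/2}}\sqrt{n/M}$. The constants therefore match with $\lambda=h\delta_*/4$, with room to spare.

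The only subtlety is the quantifier over colorings. The coloring may depend on the points, but the union bound is taken over the fixed finite set $\pmone^n$, and \cref{lem:fluct} applies to each fixed $\chi$ separately, so this is legitimate. The event $G$ is coloring-independent and is counted once. I also need the inequalities in \cref{lem:drift} to hold almost surely, so that on $G$ they combine pathwise, and \eqref{eq:roadmap} is an exact identity. With these in place, the remaining work is bookkeeping. I expect no real obstacle beyond matching the constants, which the choice $\lambda=h\delta_*/4$ does.
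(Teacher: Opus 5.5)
Your proof is correct and essentially the same as the paper's. Like the paper, you take $\lambda=h\delta_*/4$, combine \eqref{eq:roadmap} with $Z_0\ge0$, \cref{lem:drift} and \cref{lem:crowd}(i) on $G$, bound $G^c$ once by \cref{lem:crowd}(ii), and union-bound \cref{lem:fluct} over the $2^n$ colorings with exponent $hM/(2048b^3)$. (In your parenthetical aside, $\lambda=\frac34h\delta_*$ is larger than $\frac h2\delta_*$, not smaller, but this does not affect your final argument.)
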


\begin{proof}
Fix a coloring. Suppose that $G$ holds and $\Fnorm F\le h\delta_*/8$. On $G$, every conditional gain is almost surely at least $\delta_*$, by \cref{lem:drift,lem:crowd}. So \eqref{eq:roadmap} and $Z_0\ge0$ give $Z_h\ge h\delta_*+T$. Also $Z_h\le h\delta_*/4$ by \eqref{eq:Zbound}. Hence $T\le-3h\delta_*/4$, and in particular $T\le-\lambda$ for $\lambda=h\delta_*/4$. We do not optimize the constants, and apply \cref{lem:fluct} with this $\lambda$. Since $\delta_*^2=n/(64b^3M)$ by \eqref{eq:d}, the exponent in \eqref{eq:fluct} is then
\[
\frac{\lambda^2M^2}{2hn}=\frac{h\,\delta_*^2M^2}{32n}=\frac{hM}{2048b^3},
\]
and therefore $\Pr\bigl[G\text{ and }\Fnorm F\le h\delta_*/8\bigr]\le\exp(-hM/(2048b^3))$. The event $G$ does not depend on the coloring. So we bound the probability of $G^c$ once, by \cref{lem:crowd}, and take a union bound over the colorings only for the other event:
\[
\Pr\bigl[\exists\chi:\ \Fnorm{F_\chi}\le h\delta_*/8\bigr]\ \le\ \Pr(G^c)+\sum_{\chi\in\pmone^n}\Pr\bigl[G\text{ and }\Fnorm{F_\chi}\le h\delta_*/8\bigr].
\]
Since $h\delta_*/8=h\sqrt{n/M}/(64b^{3/2})$, this gives \eqref{eq:sim}.
\end{proof}

\subsection{Parameters}

It remains to choose $b$ and $h$. A comparison rectangle holds $n/M$ points on average. We take $h$ as large as possible with $n/M\ge b^{-5}h$; then also $n/M<2b^{-4}h$. The lower bound makes the threshold in \eqref{eq:sim} at least $h^{3/2}/(64b^4)$. The upper bound makes the last exponent in \eqref{eq:sim} more than $bn/4096$, so, when $M\le n$, both terms on the right-hand side of \eqref{eq:sim} are small for large $b$. This is the reason for the power $b^{-5}$: with $b^{-4}$, the last exponent would only be more than $n/4096$, which is not enough against the $2^n$ colorings. The constant $c_A$ of the theorem is $1/(64b^4)$, rounded down to $b^{-8}$, times the factor $(1+\log_2b)^{-3/2}$ that converts $h^{3/2}$ into $(\log_2n)^{3/2}$.

\begin{proof}[Proof of \cref{thm:main}]
Choose an integer $b$ so large that
\begin{equation}\label{eq:b}
\frac12\log(b/48)\ge A+2,\qquad\frac b{4096}\ge A+2+\log2.
\end{equation}
Such a choice depends only on $A$, and it forces $b>48$, so that $64\le b^4$. Set $c_A=b^{-8}/(1+\log_2b)^{3/2}$. Let $h\ge1$ be the largest integer with $b^{-5}hb^{h-1}\le n$, and set $M=b^{h-1}$. Such an $h$ exists, since $h=1$ satisfies the condition and the left-hand side tends to infinity with $h$. By maximality, $n<b^{-5}(h+1)b^h$. With $h+1\le2h$ and $b^h=bM$ this gives
\begin{equation}\label{eq:hM}
b^{-5}hM\le n<2b^{-4}hM,
\end{equation}
and with $h+1\le2^h$ and $b^{-5}\le1$ it gives $n<(2b)^h$, that is,
\begin{equation}\label{eq:logn}
\log_2n<(1+\log_2b)h.
\end{equation}
By \eqref{eq:logn} and the definition of $c_A$,
\begin{equation}\label{eq:cA}
c_A(\log_2n)^{3/2}<b^{-8}h^{3/2}.
\end{equation}

\Cref{prop:sim} needs $M\le n$. If $M>n$, then \eqref{eq:hM} gives $b^{-5}h<1$, that is, $h<b^5$, and \eqref{eq:cA} gives $c_A(\log_2n)^{3/2}<b^{-8}\,b^{15/2}<1$. So it suffices to find an anchored rectangle that has imbalance $1$ under every coloring. The rectangle $[0,1]\times[0,\min_iV_i]$ is one, because it contains just the lowest point, which is unique since all heights are distinct. So in this case the claim of the theorem holds with probability one.

So assume that $M\le n$. Then \cref{prop:sim} applies. Also $h\le2^{h-1}\le M\le n$. By the first condition in \eqref{eq:b}, the first term on the right-hand side of \eqref{eq:sim} is at most $he^{-(A+2)n}\le ne^{-(A+2)n}$. By \eqref{eq:hM} and the second condition, $hM/(2048b^3)>bn/4096\ge(A+2+\log2)n$, so the second term is at most $2^ne^{-(A+2+\log2)n}$, which equals $e^{-(A+2)n}$. The sum of the two terms is at most $(n+1)e^{-(A+2)n}\le e^{-An}$, since $n+1\le e^{2n}$. The threshold in~\eqref{eq:sim} satisfies
\[
\frac h{64b^{3/2}}\sqrt{\frac nM}\ \ge\ \frac h{64b^{3/2}}\sqrt{\frac h{b^5}}=\frac{h^{3/2}}{64b^4}\ \ge\ b^{-8}h^{3/2}\ >\ c_A(\log_2n)^{3/2}.
\]
The first inequality uses \eqref{eq:hM}, the second uses $64\le b^4$, and the last one is \eqref{eq:cA}. So by \eqref{eq:sim}, with probability at least $1-e^{-An}$, every coloring $\chi$ satisfies $\Fnorm{F_\chi}>c_A(\log_2n)^{3/2}$.
\end{proof}

\section{Related Work and Open Problems}\label{sec:related}

\paragraph{Discrepancy lower bounds.}
Matou\v{s}ek and Nikolov~\cite{MN15} used factorization norms~\cite{MNT20} to prove \mbox{$\Delta_d(n)=\Omega_d(\log^{d-1}n)$} for every fixed dimension. Wei and Yi~\cite{WY13} proved a planar lower bound of $\Omega(\log n)$ for binary nets by the orthogonal-function method. These point sets have low continuous discrepancy. The survey of Bansal and Nikolov~\cite{BN26} discusses the gap between the previously known planar bounds. Orthogonal functions and Riesz products have long been used for discrepancy lower bounds, including Hal\'asz's refinement of Roth's method~\cite{Hal81}; see the survey of Bilyk and Lacey~\cite{BL13} for their connection with the Brownian-sheet small-ball problem.

\paragraph{The Brownian sheet and Talagrand's proof.}
For the standard Brownian sheet $W$ on $[0,1]^2$, the sharp estimate is $-\log\Pr(\Fnorm W\le\eps)=\Theta\bigl(\eps^{-2}\log^3(1/\eps)\bigr)$ as $\eps\downarrow0$. The upper bound on the left-hand side is due to Bass~\cite{Bas88}, and the matching lower bound is due to Talagrand~\cite{Tal94}. Substituting $\eps=\tau/\sqrt n$ and asking for an exponent comparable to the coloring entropy $n\log2$ gives $\frac n{\tau^2}\log^3(\sqrt n/\tau)=\Theta(n)$, that is, $\tau=\Theta(\log^{3/2}n)$. This calculation only predicts the scale. The process $F_\chi$ is not Gaussian, and a Gaussian limit does not control probabilities this small.

Talagrand's proof develops a deterministic recursion based on averaging in one coordinate and selecting representatives in the other, first for Haar functions and then for the piecewise-linear functions used in the Brownian expansion \cite[Proposition~3.2 and Theorem~3.1]{Tal94}. Our local transition uses the same opposing coordinate scales. Averaging the child functions preserves the old function in expectation, while independent symmetric jumps from the points of the inner region supply the gain. The endpoint average is the conditional version of the horizontal average in Talagrand's representative-point induction \cite[proof of Proposition~3.2]{Tal94}: for an interval $I$ of stage $j$ and length $|I|$, we have $a^{(I)}(y)=\E\bigl[\frac1{|I|}\int_IF(x,y)\,dx\bigm|\F_j\bigr]$, because a point in $I$ has integral weight $(I^+-U_i)/|I|$, with conditional mean $1/2$, and points to the left or right have weight one or zero. The global argument here refines the same digit filtration point by point to obtain bounded martingale differences. A union bound over the possible sets of heavy regions shows that the total conditional gain is large, before the martingale tail estimate is applied.

\paragraph{Random inputs and applications.}
Random inputs have also been studied in online discrepancy. Jiang, Kulkarni, and Singla~\cite{JKS19} considered uniform random points for intervals and stripes, and Bansal, Jiang, Meka, Singla, and Sinha~\cite{BJMSS21} gave online algorithms for Tusn\'ady's problem with stochastic arrivals. For uniform random arrivals in $[0,1]^d$, every online algorithm incurs discrepancy $\Omega_d(\log^{d/2}T)$ within the first $T$ arrivals, with constant probability~\cite{BJSS20}. \Cref{thm:main} holds for every coloring, online or offline, and raises this bound to $\Omega(\log^{3/2}T)$ in the plane. Discrepancy is also related to range searching and approximate range counting~\cite{Lar14,WY13}. The random hard instances from \cref{thm:main} may be useful for average-case lower bounds in these settings.

\paragraph{Open problem 1: explicit constructions.}
Our proof gives a probabilistic construction of hard point sets. Is there an explicit family with discrepancy $\Omega(\log^{3/2}n)$? The main issue is to retain the oscillation gained across many scales while replacing the independent digits by a deterministic choice.

\paragraph{Open problem 2: higher dimensions.}
In dimension $d\ge3$, the gap between the bounds $\Omega_d(\log^{d-1}n)$ \cite{MN15} and $O_d(\log^{d-1/2}n)$ \cite{Nik17} for boxes in $\R^d$ remains. Bansal~\cite[Conjecture~1.4.1]{Ban22} conjectured that the discrepancy of Tusn\'ady's problem in $d$ dimensions is $O(\log^{d-1}n)$; \cref{thm:main} shows that this fails for $d=2$. Our one-coordinate refinement confines a point's influence to a single vertical interval. An extension needs an analogous local comparison when several coordinates are refined or coarsened. The Brownian-sheet conjecture predicts the exponent $\eps^{-2}\log^{2d-1}(1/\eps)$ for the negative log small-ball probability \cite[\mbox{Conjecture~4}]{BL13}. Setting $\eps=\tau/\sqrt n$ and balancing this exponent against $n\log2$ suggests $\tau=\Theta(\log^{d-1/2}n)$, matching the known upper bound. This heuristic does not establish a transfer to finite point sets.

\appendix
\section{Symmetric Sums and Bounded Differences}\label{app:facts}

\subsection{Sums of symmetric variables}\label{app:sym}

\begin{proof}[Proof of \cref{fact:sym}]
Write $X=X_1+\dots+X_m$.

\emph{The shift does not hurt.} The $X_i$ are independent and symmetric. So $X$ and $-X$ have the same distribution, and $\E|\theta+X|=\E|\theta-X|$. Averaging the two and using $|\theta+x|+|\theta-x|\ge2|x|$ gives $\E|\theta+X|\ge\E|X|$.

\emph{Two moments.} Every $X_i$ is symmetric and has a finite second moment, so $\E X_i=0$. With independence, this gives $\E X^2=\sum_i\E X_i^2=m\sigma^2$. In the expansion of $X^4$, a term has expectation zero unless every index appears an even number of times. The surviving terms have either four equal indices or two pairs of equal indices, and each pair $i<j$ arises from $\binom42=6$ terms. So
\[
\E X^4=\sum_i\E X_i^4+6\sum_{i<j}\E X_i^2\,\E X_j^2\le3m\sigma^4+3m(m-1)\sigma^4=3m^2\sigma^4.
\]

\emph{From moments to the absolute value.} The claim is trivial for $m=0$, so let $m\ge1$. By H\"older's inequality with exponents $3/2$ and $3$,
\[
\E X^2=\E\bigl(|X|^{2/3}\,|X|^{4/3}\bigr)\le(\E|X|)^{2/3}(\E X^4)^{1/3}.
\]
Hence
\[
\E|X|\ge\frac{(\E X^2)^{3/2}}{(\E X^4)^{1/2}}\ge\frac{m^{3/2}\sigma^3}{\sqrt3\,m\sigma^2}=\sigma\sqrt{m/3}.\qedhere
\]
\end{proof}

\subsection{Bounded differences}\label{app:bd}

The following is a moment-generating-function form, with a weaker constant, of the bounded-differences inequality~\cite{McD89}. Its proof is the martingale argument of Azuma~\cite{Azu67}.

\begin{fact}[Bounded differences]\label{fact:bd}
Let $\xi_1,\dots,\xi_n$ be independent random variables, and let $\Gamma$ be a random element independent of $(\xi_1,\dots,\xi_n)$. Let $f=\varphi(\Gamma,\xi_1,\dots,\xi_n)$ be integrable, where $\varphi$ is measurable and, for every value $\gamma$ of $\Gamma$, changing the value of one $\xi_i$ changes $\varphi(\gamma,\cdot)$ by at most $c>0$. Then, for every real $t$, almost surely
\[
\E\Bigl[\exp\bigl(t\,(f-\E[f\mid\Gamma])\bigr)\Bigm|\Gamma\Bigr]\le\exp\Bigl(\frac{nt^2c^2}2\Bigr).
\]
\end{fact}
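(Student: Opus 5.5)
We prove the fact by the standard Doob martingale argument together with Hoeffding's lemma, applied separately for each fixed value of $\Gamma$. Since $\Gamma$ is independent of $(\xi_1,\dots,\xi_n)$, conditioning on $\Gamma=\gamma$ leaves the $\xi_i$ independent with their original laws. It therefore suffices to fix $\gamma$ and prove
\[
\E\exp\bigl(t(g-\E g)\bigr)\le\exp(nt^2c^2/2)
\]
for the function $g=\varphi(\gamma,\xi_1,\dots,\xi_n)$, which has bounded differences with constant $c$. Measurability in $\gamma$ and the almost sure statement then follow by the usual disintegration, that is, Fubini applied to the product law of $(\Gamma,\xi)$.

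\emph{The martingale.} For fixed $\gamma$, set $g_k=\E[g\mid\xi_1,\dots,\xi_k]$, so that $g_0=\E g$, $g_n=g$, and the differences are $d_k=g_k-g_{k-1}$. Because the $\xi_i$ are independent, we can write
\[
g_k=\int\varphi(\gamma,\xi_1,\dots,\xi_k,x_{k+1},\dots,x_n)\,d\mu(x_{k+1},\dots,x_n).
\]
Given $\xi_1,\dots,\xi_{k-1}$, the difference $d_k$ is then a function of $\xi_k$ alone. As $\xi_k$ varies, this function ranges within an interval of length at most $c$, since changing $\xi_k$ inside the integrand changes $\varphi$ by at most $c$. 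Moreover, $d_k$ has conditional mean zero.

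\emph{Hoeffding's lemma.} A mean-zero variable with values in an interval of length $c$ satisfies $\E e^{sY}\le e^{s^2c^2/8}$. The claimed bound uses the weaker constant $1/2$, so we can avoid the sharp lemma and use the crude version: $|d_k|\le c$, because $d_k$ has mean zero and lies in an interval of length $c$. Convexity on $[-c,c]$ then gives
\[
\E\bigl[e^{td_k}\mid\xi_{<k}\bigr]\le\cosh(tc)\le e^{t^2c^2/2}.
\]
Iterating the tower property from $k=n$ down to $k=1$ multiplies these bounds and gives $\E e^{t(g-\E g)}\le e^{nt^2c^2/2}$.

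\emph{The main point of care.} The substantive step is the reduction to fixed $\gamma$. We must check that $\E[f\mid\Gamma]$ equals $\int\varphi(\Gamma,x)\,d\mu(x)$ almost surely, and the conditional exponential moment likewise. This holds by independence and the product structure of the joint law. Integrability of $f$ ensures that these conditional expectations are well defined, and for each $\gamma$ the function $\varphi(\gamma,\cdot)$ differs from its mean by at most $nc$, so it is bounded and the exponential moments are finite. The remaining steps are routine.
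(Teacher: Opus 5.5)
Your proposal is correct and follows essentially the same route as the paper: reduce to a fixed $\gamma$ by Fubini, show that each Doob martingale difference has conditional mean zero and satisfies $|d_k|\le c$, and then apply the convexity bound $\cosh(tc)\le e^{t^2c^2/2}$ iteratively via the tower property. The paper argues $|d_k|\le c$ slightly differently, saying that $f_i$ is within $c$ of every value of $f_i(\xi_1,\dots,\xi_{i-1},\cdot)$ and hence of their mean; your argument, that a mean-zero variable taking values in an interval of length $c$ is bounded by $c$, is equivalent.
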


\begin{proof}
Since $\Gamma$ is independent of the $\xi_i$, conditioning on $\Gamma=\gamma$ amounts to fixing the first argument of $\varphi$, by Fubini's theorem. So it suffices to prove, for each fixed $\gamma$, the unconditional bound \mbox{$\E\exp(t(f-\E f))\le\exp(nt^2c^2/2)$} for the function $f=\varphi(\gamma,\cdot)$ of $\xi_1,\dots,\xi_n$. So fix $\gamma$. Any two values of $f$ differ by at most $nc$, so $f$ is bounded.

\emph{Martingale differences.} Reveal $\xi_1,\dots,\xi_n$ one at a time. Let $f_i(s_1,\dots,s_i)$ be the conditional expectation of $f$ given $\xi_1=s_1,\dots,\xi_i=s_i$, and put
\[
Y_i=f_i(\xi_1,\dots,\xi_i)-f_{i-1}(\xi_1,\dots,\xi_{i-1}).
\]
The $Y_i$ sum to $f-\E f$. Since $\xi_i$ is independent of the earlier variables, $f_{i-1}(s_1,\dots,s_{i-1})$ is the mean of $f_i(s_1,\dots,s_{i-1},\xi_i)$. So $Y_i$ has conditional mean zero given $\xi_1,\dots,\xi_{i-1}$. Since the later variables are independent of the first $i$, the function $f_i$ is the average of $f$ over the later variables, and changing $s_i$ changes $f_i$ by at most $c$. Thus $f_i(\xi_1,\dots,\xi_i)$ differs by at most $c$ from every value of $f_i(\xi_1,\dots,\xi_{i-1},\cdot)$, and hence from their mean $f_{i-1}$. So $|Y_i|\le c$.

\emph{One difference.} Let $\mathcal H$ be a $\sigma$-field, and let $\E[Y\mid\mathcal H]=0$ and $|Y|\le c$. For every real $t$, the convexity of $y\mapsto e^{ty}$ on $[-c,c]$ gives
\[
e^{tY}\le\frac{c+Y}{2c}\,e^{tc}+\frac{c-Y}{2c}\,e^{-tc},\qquad\text{hence}\qquad\E[e^{tY}\mid\mathcal H]\le\cosh(tc)\le e^{t^2c^2/2}.
\]
The last inequality compares the Taylor series $\cosh x=\sum_kx^{2k}/(2k)!$ and $e^{x^2/2}=\sum_kx^{2k}/(2^kk!)$ term by term, using $(2k)!\ge2^kk!$.

\emph{All differences.} Apply the last bound to $Y_n$ given $\xi_1,\dots,\xi_{n-1}$, then to $Y_{n-1}$ given $\xi_1,\dots,\xi_{n-2}$, and so on. This is possible because $Y_1,\dots,Y_{i-1}$ are functions of $\xi_1,\dots,\xi_{i-1}$. Multiplying the $n$ bounds gives the claim.
\end{proof}

\section*{AI methodology statement}
The author has thought about this problem since 2013. Starting in July 2026, the author worked on it interactively with GPT (GPT-5.6, GPT-6, Codex) and Claude (Claude Opus~4.6, Claude Opus~5.5, Claude Cowork). This interaction first produced a proof based on a killed suffix walk, and then simplified it to the proof by local oscillation presented in \AIsections. The same tools, together with Xean~\cite{Xu26}, were used to check the proof and to prepare the manuscript. A Lean formalization of the main lower-bound theorem was produced with GPT and is available at \url{https://github.com/wzskytop/tusnady-lean}. The author verified the correctness and originality of all content.

\end{document}